\theoremstyle{definition}
\newtheorem{theorem}{Theorem}
\newcommand{\oket}[1]{\left | #1 \right)}
\renewcommand{\epsilon}{\varepsilon}
\renewcommand{\O}[1]{ \mathcal{O}(#1)}
\newcommand{\norm}[1]{\left\|#1\right\|}
\newcommand{\comm}[1]{\left[#1\right]}
\newcounter{para}
\newcommand{\dist}{\textrm{dist}}
\newcommand*\bigcdot{\mathpalette\bigcdot@{.5}}
\newcommand*\bigcdot@[2]{\mathbin{\vcenter{\hbox{\scalebox{#2}{$\m@th#1\bullet$}}}}}
\newcolumntype{L}{>{$}l<{$}} % math-mode version of "l" column type
\newcolumntype{C}{>{$}c<{$}} % math-mode version of "c" column type
\newcolumntype{R}{>{$}r<{$}} % math-mode version of "r" column type
\newcommand*{\addFileDependency}[1]{% argument=file name and extension
  \typeout{(#1)}
  \@addtofilelist{#1}
  \IfFileExists{#1}{}{\typeout{No file #1.}}
}
\renewcommand{\L}{\mathcal L}
\renewcommand{\P}{\mathbb P}
\newmdenv[topline=false,rightline=false,bottomline=false,linewidth=2pt,linecolor=white!60!black,]{leftborder}
\newcommand{\subalign}[1]{%
  \vcenter{%
    \Let@ \restore@math@cr \default@tag
    \baselineskip\fontdimen10 \scriptfont\tw@
    \advance\baselineskip\fontdimen12 \scriptfont\tw@
    \lineskip\thr@@\fontdimen8 \scriptfont\thr@@
    \lineskiplimit\lineskip
    \ialign{\hfil$\m@th\scriptstyle##$&$\m@th\scriptstyle{}##$\hfil\crcr
      #1\crcr
    }%
  }%
}
\renewcommand{\section}[1]{\textit{#1.---}}
\begin{document}
\title{The Lieb-Robinson light cone for power-law interactions}
\date{\today}

\author{Minh C. Tran}
\email{minhtran@umd.edu} 

\affiliation{Joint Center for Quantum Information and Computer Science,
NIST/University of Maryland, College Park, MD 20742, USA}
\affiliation{Joint Quantum Institute, NIST/University of Maryland, College Park, MD 20742, USA}

\author{Andrew Y. Guo}
\affiliation{Joint Center for Quantum Information and Computer Science,
NIST/University of Maryland, College Park, MD 20742, USA}
\affiliation{Joint Quantum Institute, NIST/University of Maryland, College Park, MD 20742, USA}

\author{Christopher L. Baldwin}
\affiliation{Joint Center for Quantum Information and Computer Science,
NIST/University of Maryland, College Park, MD 20742, USA}
\affiliation{Joint Quantum Institute, NIST/University of Maryland, College Park, MD 20742, USA}

\author{Adam Ehrenberg}
\affiliation{Joint Center for Quantum Information and Computer Science,
NIST/University of Maryland, College Park, MD 20742, USA}
\affiliation{Joint Quantum Institute, NIST/University of Maryland, College Park, MD 20742, USA}

\author{Alexey V. Gorshkov}
\affiliation{Joint Center for Quantum Information and Computer Science,
NIST/University of Maryland, College Park, MD 20742, USA}
\affiliation{Joint Quantum Institute, NIST/University of Maryland, College Park, MD 20742, USA}

\author{Andrew Lucas}
\affiliation{Department of Physics, University of Colorado, Boulder CO 80309, USA}
\affiliation{Center for Theory of Quantum Matter, University of Colorado, Boulder CO 80309, USA}

\begin{abstract}
	The Lieb-Robinson theorem states that information propagates with a finite velocity in quantum systems on a lattice with nearest-neighbor interactions.  What are the speed limits on information propagation in quantum systems with power-law interactions, which decay as $1/r^\alpha$ at distance $r$? Here, we present a definitive answer to this question for all exponents $\alpha>2d$ and all spatial dimensions $d$.  Schematically, information takes time at least $r^{\min\{1, \alpha-2d\}}$ to propagate a distance~$r$. As recent state transfer protocols saturate this bound, our work closes a decades-long hunt for optimal Lieb-Robinson bounds on quantum information dynamics with power-law interactions.
\end{abstract}
\maketitle

Over a century ago, Einstein realized that there is a speed limit to information propagation. If no physical object or signal can travel faster than light, then the speed of light itself must constrain the dynamics of quantum information and entanglement.  In ordinary quantum systems, however, \emph{emergent} speed limits can arise that place more stringent restrictions on information propagation than does the speed of light.  For example, in quantum spin systems with nearest-neighbor interactions on a lattice, Lieb and Robinson proved in 1972 that there is a \emph{finite velocity} of information propagation \cite{LR}.

Of course, most non-relativistic physical systems realized in experiments include long-range interactions such as the Coulomb interaction, the dipole-dipole interaction, or the van-der-Waals interaction.  Each of these decays with distance as a power law $1/r^\alpha$ for some exponent $\alpha$.  What is the fundamental speed limit on the propagation of quantum information in these systems?

Despite the importance of this question in designing and constraining the operation of future quantum technologies \cite{PhysRevA.46.R6797,fossfeig2016entanglement,linkeExperimentalComparisonTwo2017,Deshpande2018,landsmanVerifiedQuantumInformation2019}, bounding information propagation in systems with power-law interactions has been a notoriously challenging mathematical physics problem.  In 2005, Hastings and Koma~\cite{HK} showed that it takes a time $t\gtrsim\log r$ to send information a distance $r$, for all $\alpha>d$, where $d$ is the dimension of the lattice.  By analogy to Einstein's relativity, we say that there is at least a ``logarithmic light cone" for such power-law interactions.  However, it was suspected that this bound was far from tight, and ten years later it was shown that $t \gtrsim r^\gamma $, for an exponent $0<\gamma<1$ when $\alpha>2d$ \cite{Foss-FeigG,elseImprovedLiebRobinsonBound2018,tranLocalityDigitalQuantum2019a}.
In 2019, Chen and Lucas~\cite{chenFiniteSpeedQuantum2019} proved the existence of a linear light cone ($t\gtrsim r$) for all $\alpha > 3$ in $d = 1$; Kuwahara and Saito~\cite{kuwaharaStrictlyLinearLight2020} later generalized this result to higher dimensions, finding a linear light cone for all $\alpha > 2d+1$.   These recent results prove that power-law interactions are, for all practical purposes, entirely local for sufficiently large~$\alpha$.

A natural question is then how small $\alpha$ must be in order to \emph{break} a linear light cone. Fast state-transfer and entanglement-generation protocols developed in the past year~\cite{Zachary17,tranHierarchyLinearLight2020a,kuwaharaStrictlyLinearLight2020,2020arXiv201002930T} have ultimately demonstrated that the time $t$ required to send information a distance $r$ obeys $t\lesssim r^{\min(\alpha-2d,1)}$ for any $\alpha>2d$ and $t\lesssim r^{o(1)}$ for $\alpha<d$, where $o(1)$ is an arbitrarily small constant.   Combining all best known results in the literature leads to the diagram shown in Fig. \ref{fig:light-cones}, which compares known information-transfer protocols to corresponding Lieb-Robinson bounds.

\begin{figure}[t]
	\includegraphics[width = 0.45\textwidth]{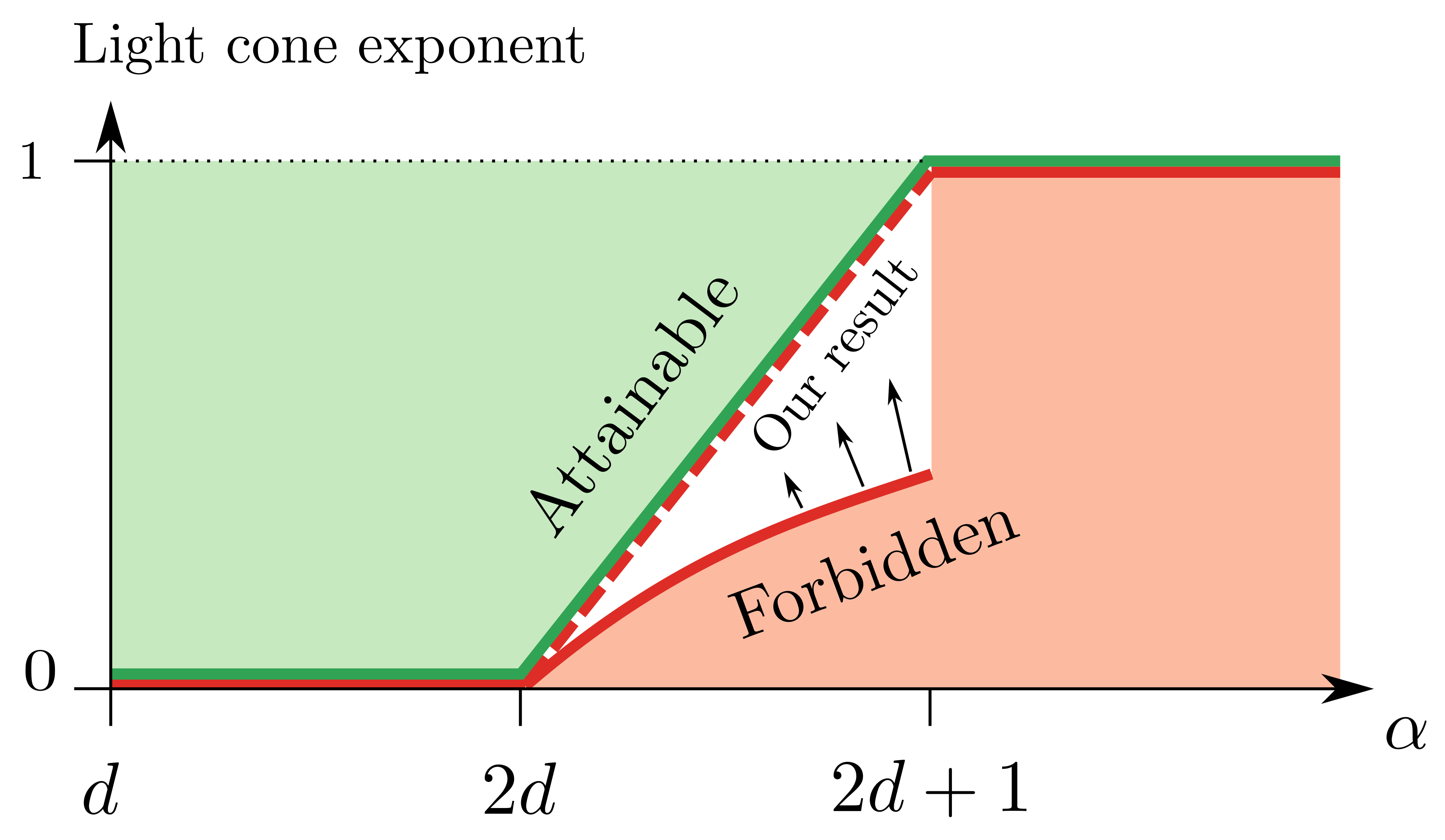}
	\caption{The gap in the Lieb-Robinson literature in $d>1$ dimensions.
	The red solid lines represent the exponent $\gamma$ of the Lieb-Robinson light cone $t \gtrsim r^{\gamma}$ in literature.
	The green solid lines correspond to the light cone exponents of best-known information-propagating protocols.
	Accordingly, the green region corresponds to attainable light cone exponents, whereas the red region is forbidden by the known bounds.
	Our result (red dashed line) closes the gap in our understanding of the Lieb-Robinson light cone.
	}
	\label{fig:light-cones}
\end{figure}

In this Letter, we complete this extensive literature on Lieb-Robinson bounds for power-law interactions~\cite{HK,Foss-FeigG,elseImprovedLiebRobinsonBound2018,tranLocalityDigitalQuantum2019a,chenFiniteSpeedQuantum2019,Zachary17,kuwaharaStrictlyLinearLight2020,tranHierarchyLinearLight2020a,2020arXiv201002930T,NachtergaeleOS2006,Nachtergaele2006,GongFF,Storch15,NRSS09,SHKM10,SH10}, by proving that quantum information is contained within the Lieb-Robinson light cone  $t\gtrsim r^{\min(\alpha-2d-\epsilon,1)}$, for any $\epsilon>0$.  This result closes the remaining gap between bounds and protocols in Fig. \ref{fig:light-cones}, and  concludes the fifteen-year quest to understand the fundamental speed limit on quantum information in the presence of power-law interactions.
We sketch the proof of the result in the main text and refer readers to the Supplemental Material (SM)~\cite{SM} for a rigorous treatment.

\section{Main result}We consider a $d$-dimensional regular lattice $\Lambda$, a finite-level system at every site of the lattice, and a two-body power-law Hamiltonian $H(t)$ with an exponent $\alpha$ supported on the lattice.
Specifically, we assume
$H(t) = \sum_{i,j\in \Lambda} h_{ij}(t)$ is a sum of two-body terms $h_{ij}$ supported on sites $i,j$ such that $\norm{h_{ij}(t)} \leq 1/\dist(i,j)^\alpha$ for all $i\neq j$,
where $\norm{\cdot}$ is the operator norm and
$\dist(i,j)$ is the distance between $i,j$.
In the following discussion, we assume $\Lambda$ is a hypercubic lattice of qubits for simplicity.

We use $\L$ to denote the Liouvillian corresponding to the Heisenberg evolution under Hamiltonian $H$, i.e. $\L\oket{O} \equiv \oket{i [H,O]}$ for any operator $O$, and use $e^{\L t} \oket{O} \equiv \oket{O(t)}$ to denote the time-evolved version of the operator $O$.
We also use $\P_{r}^{(i)}\oket{O}$ to denote an operator constructed from $O$ by decomposing $O$ into a sum of Pauli strings and removing strings that are supported entirely within a ball of radius $r$ from $i$.
Colloquially speaking, $\P_{r}^{(i)}\oket{O}$ is the component of %fraction of
 $O$ that has non-trivial support on sites a distance at least $r$ from site $i$.
If $i$ is the origin of the lattice, we drop the superscript $i$ and simply write $\P_r$ for brevity.

Given a unit-norm operator $O$ initially supported at the origin, our main result is a bound on how much $O$ spreads to a distance $r$ and beyond under the evolution $e^{\L t}$:
\begin{theorem}\label{thm:main-bound}
	For any $\alpha\in (2d,2d+1)$ and an arbitrarily small $\epsilon >0$, there exist constants $c,C\geq 0$ such that
	\begin{align}
		\norm{\P_r e^{\L t}\oket{O}} \leq
		C \left(\frac{t}{r^{\alpha-2d-\epsilon}}\right)^{\frac{\alpha-d}{\alpha-2d}-\frac{\epsilon}{2}}\label{eq:main-bound}
	\end{align}
	holds for all $1\leq t \leq c r^{\alpha-2d-\epsilon}$.
\end{theorem}
Because $\norm{\P_r e^{\L t}\oket{O}}$ can be both upper- and lower-bounded by linear functions of $\sup_{A}\norm{\comm{A,e^{\L t}O}}$, where $A$ is a unit-norm operator supported at least a distance $r$ from $O$,
\cref{eq:main-bound} is equivalent to a bound on the unequal-time commutators commonly used in the Lieb-Robinson literature.

For $\alpha \in (2d,2d+1)$, by setting the left-hand side of \cref{eq:main-bound} to a constant, \Cref{thm:main-bound} implies the light cone $t \gtrsim r^{\alpha-2d-\epsilon}$ for some $\epsilon$ that can be made arbitrarily small.
Note that our definition does not require $\norm{h_{ij}}$ to decay exactly as $1/\dist(i,j)^\alpha$; it may actually decay faster than $1/\dist(i,j)^\alpha$ and still satisfy the condition of a power-law interaction with an exponent $\alpha$.
Therefore, for $\alpha\geq 2d+1$ and power-law Hamiltonians $H = \sum_{ij} h_{ij}$ satisfying $\norm{h_{ij}}\leq 1/\dist(i,j)^\alpha < 1/\dist(i,j)^{2d+1-\epsilon}$, \cref{thm:main-bound} implies a linear light cone $t \gtrsim r^{1-2\epsilon}$.

\section{Sketch of proof}
For simplicity, we assume here that the lattice diameter is $\O{r}$.
We show in the SM~\cite{SM} that interactions whose ranges are much larger than $r$ do not contribute significantly to the dynamics of $O$ and, therefore, can be safely removed from the Hamiltonian.
Our strategy is to group the interactions of the Hamiltonian by their ranges, prove a bound for short-range interactions, and recursively add longer-range interactions to the Hamiltonian.

Specifically, we choose $\ell_k \equiv L^k$ for $k=1,\dots,n$, where $L,n$ are parameters to be chosen later.
We use $H_k$ to denote those terms of $H$ with range at most $\ell_k$
and use $\L_k \equiv i[H_k,\cdot]$ to denote the corresponding Liouvillian.
We start with the standard Lieb-Robinson bound for $H_1$~\cite{LR}:
\begin{align}
	\norm{\P_r e^{\L_1 t}\oket{O}} \lesssim e^{\frac{v_1 t - r}{\ell_1}},\label{eq:bound-for-H1}
\end{align}
where $v_1 \propto \ell_1 = L$ is the rescaled Lieb-Robinson velocity, and prove a bound for $H_2$ by adding $V_2 \equiv H_2 - H_1$, i.e., interactions of range between $\ell_1$ and $\ell_2$, to the Hamiltonian $H_1$.

For that, we move into the interaction picture of $H_1$ so that we can decompose the evolution $e^{\L_2 t} = e^{\L_{2,I} t} e^{\L_1 t}$ into two consecutive evolutions, where $e^{\L_{2,I}t}$ is the evolution under $V_{2,I} \equiv e^{\L_1 t} V_2$.
Loosely speaking, the light cone induced by $H_2$ will be a ``sum'' of the light cones induced by $H_1$ and $V_{2,I}$ individually (see the SM~\cite{SM} for a proof.)
With the light cone of $H_1$ given by \cref{eq:bound-for-H1}, our task is to find the light cone of $V_{2,I}$.

For this purpose, we consider the structure of $V_{2,I}$ and show that, with a suitable rescaling of the lattice, the interactions in $V_{2,I}$ decay exponentially with distance.
We then obtain the light cone of $V_{2,I}$ using the standard Lieb-Robinson bound on the rescaled lattice.
Specifically, we divide the lattice into non-overlapping hypercubes of length $\ell_2$ (see \cref{fig:supersite}).
Given $x,y$ as the centers of two hypercubes, we define $\dist(x,y)/\ell_2$ to be the rescaled distance between the hypercubes.
We shall estimate the strength of the interaction between hypercubes under the Hamiltonian $V_{2,I}$.

\begin{figure}[t]
	\includegraphics[width = 0.40\textwidth]{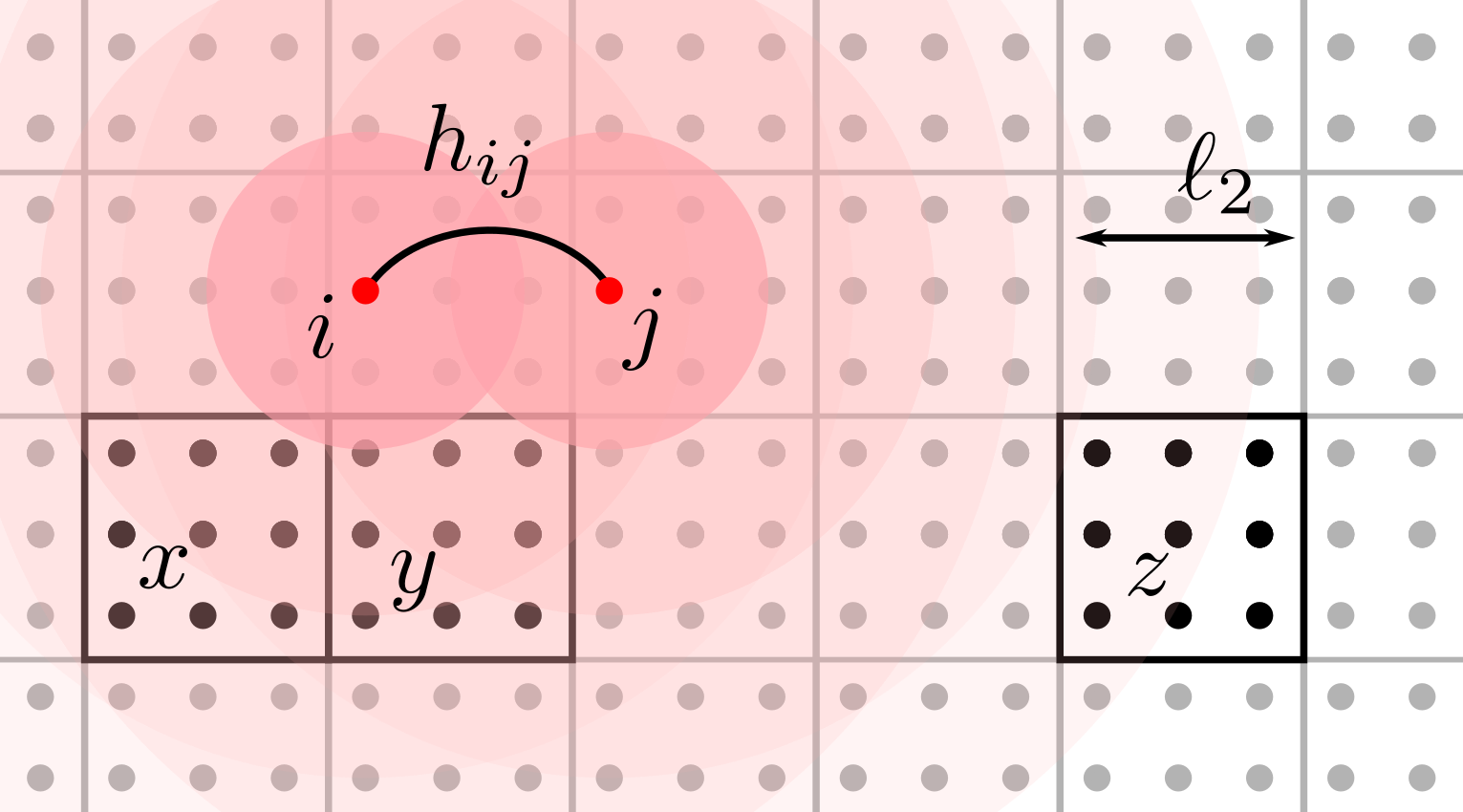}
	\caption{We study the structure of $V_{2,I}$ by dividing the lattice into hypercubes of length $\ell_2$ (labeled by $x,y$, and $z$ for example).
	In the interaction picture, how much each $e^{\L_1 t} h_{ij}$ contributes to the pair-wise ``effective interaction'' between two hypercubes depends on how strongly the support of $e^{\L_1 t} h_{ij}$ (represented by the shaded area) overlaps with the hypercubes.
	Because of the bound in \cref{eq:bound-for-H1}, the evolved operator $e^{\L_1 t} h_{ij}$ is largely confined to the light cones induced by $\L_1$ around $i$ and $j$ (the smallest disks around $i$ and $j$).
	The component of $e^{\L_1 t} h_{ij}$ supported outside this light cone is exponentially suppressed with distance (represented by lighter shade).
	Consequently, the effective interaction between the hypercubes $x$ and $z$ is exponentially smaller than the one between $x$ and $y$.
	}
	\label{fig:supersite}
\end{figure}

We first consider the case $t = 0$ so that $V_{2,I} = V_2$.
Because each interaction in $V_{2}$ has range at most $\ell_2$, no interaction $h_{ij}$ is supported on two distinct hypercubes unless they are nearest neighbors.
Therefore, only nearest-neighboring hypercubes may interact under $V_{2,I} = V_2$.

The case $t>0$ is slightly more complicated.
The support of an interaction $h_{ij}$ in $V_2$ may expand under $e^{\L_1 t}$, and, hence,
non-nearest-neighboring hypercubes may interact with each other.
However, due to \cref{eq:bound-for-H1}, the support of $e^{\L_1 t} h_{ij}$ would largely remain  inside the %light cones
balls of radius $v_1 t$ around $i,j$.
The interactions between hypercubes are exponentially suppressed with distance by \cref{eq:bound-for-H1}.
Therefore, the system of hypercubes would interact via a nearly finite-range interaction (see \cref{fig:supersite}).

To apply the standard Lieb-Robinson bound for this system of hypercubes, we estimate the maximum effective interaction between any pair of nearest-neighboring hypercubes centered on $x,y$.
In particular, assuming $v_1 t \leq \ell_2$, the primary contributions to such an interaction come from $\propto \ell_2^d \times \ell_2^d = \ell_2^{2d}$ interaction terms $e^{\L_1 t}h_{ij}$ whose light cones under $H_1$ may overlap with the hypercubes.
Because each interaction $h_{ij}$ has norm at most $1/\ell_1^\alpha$ by our assumption, the total contribution to the interactions between the cubes $x,y$ is $\O{{\ell_2^{2d}}/{\ell_1^\alpha}}$.
Applying the standard finite-range Lieb-Robinson bound on the system of hypercubes, where the maximum energy per interaction is $\O{{\ell_2^{2d}}/{\ell_1^\alpha}}$ and the distance is rescaled by a factor $\ell_2$, we obtain the bound for the evolution under $V_{2,I}$:
\begin{align}
	\norm{\P_r e^{\L_I t} \oket{O}} \lesssim \exp\left(\mathcal{O}\left(\frac{\ell_2^{2d}}{\ell_1^\alpha}\right)t - \frac{r}{\ell_2} \right)
    \equiv e^{\frac{\Delta v t - r}{\ell_2}}, \label{eq:bound-for-V2}
\end{align}
where $\Delta v = \O{\ell_2^{2d+1}/\ell_1^\alpha}$.

After getting the light cone for the evolution under $V_{2,I}$, we now combine it with the evolution under $H_1$ to obtain the light cone of $H_2$.
Intuitively, the evolutions under $H_1$ and $V_{2,I}$ for time $t$ may each grow the support radius of an operator by $v_1 t$ and $\Delta v t$ respectively.
Therefore, one would expect an operator evolved under $H_1$ and $V_{2,I}$ consecutively, each for time $t$, may have the support radius at most $(v_1 + \Delta v )t$.
In the SM~\cite{SM}, we show that
\begin{align}
	\norm{\P_r e^{\L_2 t}\oket{O}}
	=\norm{\P_r e^{\L_{2,I} t} e^{\L_1 t} \oket{O}} \lesssim e^{\frac{v_2 t - r}{\ell_2}}, \label{eq:bound-for-H2}
\end{align}
where
\begin{align}
	v_2 \propto \log(r) v_1 + \Delta v = \log(r) v_1 + \frac{\ell_2^{2d+1}}{\ell_1^\alpha}
\end{align}
and $r$ is the diameter of the lattice $\Lambda$.
The additional factor of $\log(r)$ (compared to our intuition) comes from the enhancement to the operator spreading due to the increased support size after the first evolution $e^{\L_1 t}$.

Up to this point, we have used the bound \cref{eq:bound-for-H1} for $H_1$ to prove a bound for $H_2$ [\cref{eq:bound-for-H2}], which has the same form.
Repeating this process, we arrive at similar bounds for $H_k$ $(k = 3,4,\dots,n)$:
\begin{align}
	\norm{\P_r e^{\L_k t}\oket{O}}
	\lesssim e^{\frac{v_k t - r}{\ell_k}}, \label{eq:bound-for-Hk}
\end{align}
where the velocity $v_k$ is defined iteratively:
\begin{align}
	v_k \propto \log (r) v_{k-1} +\frac{\ell_k^{2d+1}}{\ell_{k-1}^\alpha}
	.\label{eq:vkiterative}
\end{align}
 Increasing $k$ makes the bound in \cref{eq:bound-for-Hk} applicable for longer and longer interactions.
However, doing so also increases $\ell_k$, resulting in weaker and weaker bounds.
In particular, if $\ell_k >r$, \cref{eq:bound-for-Hk} becomes trivial even for $t\leq r/v_k$.
 Therefore, we stop the iteration at $k = n$ such that $\ell_n$ is slightly smaller than $r$.
 Specifically, we choose $n$ such that $\ell_n = L^n = r/ \chi(t,r)$, where $\chi(t,r)>1$ is a function of $t,r$.
 For $v_n t\leq r/2$, the right-hand side of \cref{eq:bound-for-Hk} becomes
 \begin{align}
 	e^{\frac{v_n t - r}{\ell_n}} \lesssim  e^{-\frac{r}{2\ell_n}}
 	\lesssim e^{-\frac{1}{2}\chi(t,r)}
 	\lesssim \frac{1}{\chi(t,r)^{\omega}},
 \end{align}
 where we upper-bound an exponentially decaying function of $\chi(t,r)$ by a power-law decaying function of $\chi(t,r)$ with an exponent $\omega>0$.
 Choosing $\chi(t,r) = (r^{\alpha-2d}/t)^{\zeta}$, where $\zeta>0$ is an arbitrarily small constant, and $\omega = \frac{\alpha-d}{\zeta(\alpha-2d)}$, we obtain the desired bound
\begin{align}
	\norm{\P_r e^{\L_n t}\oket{O}}
	\lesssim \left(\frac{t}{r^{\alpha-2d}}\right)^{\frac{\alpha-d}{\alpha-2d}}. \label{eq:bound-for-Hn}
\end{align}

Note that \cref{eq:bound-for-Hn} only holds for $t \leq r/2v_n$.
To maximize the range of validity of \cref{eq:bound-for-Hn}, we aim to choose $L$ such that $v_n$ is as small as possible.
Without the second term in \cref{eq:vkiterative}, we would expect $v_k$ to increase by a factor of $\log r$ between iterations.
Meanwhile, given $\ell_k = L^k$, the second term in \cref{eq:vkiterative} also increases by a factor $L^{2d+1-\alpha}$ in every iteration.
Choosing $L^{2d+1-\alpha} \propto \log r$ so that the two terms in \cref{eq:vkiterative} have roughly equal contributions to $v_k$, we expect
 \begin{align}
     v_n \propto (\log r)^n \propto L^{n(2d+1-\alpha)}
     = \left(\frac{r}{\chi(t,r)}\right)^{2d+1-\alpha}
 \end{align}
up to a small logarithmic correction in $r$.
Substituting the earlier choice of $\chi(t,r)$, we have
\begin{align}
 	v_n t \propto r \left(\frac{t}{r^{\alpha-2d}}\right)^{1+o(1)} \leq r,
 \end{align}
 where $o(1)$ represents an arbitrarily small constant,
 for all $t \leq r^{\alpha-2d}$.
 Therefore, the bound in \cref{eq:bound-for-Hn} holds as long as $t \lesssim r^{\alpha-2d}$.

The bound in \cref{eq:bound-for-Hn} applies to the Hamiltonian $H_n$ constructed from $H$ by taking interactions of range at most $\ell_n$, which is slightly smaller than $r$ for all $t\lesssim r^{\alpha-2d}$.
To add interactions of range larger than $\ell_n$ to the bound, we use the identity~\cite{chenOperatorGrowthBounds2019}:
\begin{align}
	e^{\L t} = e^{\L_n t} + \sum_{\substack{i,j:\dist({i,j})>\ell_n}}
	\int_0^t ds\ e^{\L (t-s)} \L_{h_{ij}} e^{\L_n s} , \label{eq:add-long-range}
\end{align}
where $\L_{h_{ij}} = i[h_{ij},\cdot]$ is the Liouvillian corresponding to the interaction $h_{ij}$.
We will argue that the contribution from the second term of the right-hand side to the bound on $\norm{\P_r e^{\L t}\oket{O}}$ is small.

Note that $\L_{h_{ij}} e^{\L_n s} \oket{O}$ vanishes if $e^{\L_n s} \oket{O}$ has no support on the sites $i,j$.
Suppose site $i$ is closer to the origin than site $j$.
Then, most contributions to the right-hand side of \cref{eq:add-long-range}  come from terms $h_{ij}$ where $i$ lies within the light cone of $e^{\L_n s}\oket{O}$.
Let $\mathcal V$ be the volume inside this light cone at time $t$.
Using the triangle inequality on \cref{eq:add-long-range}, we would arrive at
\begin{align}
	\norm{\P_r e^{\L t}\oket{O}} \lesssim \norm{\P_r e^{\L_n t}\oket{O}} + \frac{\mathcal V t}{\ell_n^{\alpha-d}},\label{eq:bound-with-V}
\end{align}
where $\mathcal V$ is the result of the sum over $i$ inside the light cone, summing over $j$ where $\dist(i,j)>\ell_n$ gives a factor proportional to $1/\ell_n^{\alpha-d}$, and
the integral over time  in \cref{eq:add-long-range} gives the factor $t$.

Suppose we can apply the desired light cone $t \gtrsim r^{\alpha-2d}$.
Then we can estimate the volume inside the light cone $\mathcal V \lesssim t^{d/(\alpha-2d)}$.
Substituting it into the above bound together with the value of $\ell_n$, we would arrive at
\begin{align}
	\norm{\P_r e^{\L t}\oket{O}} \lesssim \left(\frac{t}{r^{\alpha-2d}}\right)^{\frac{\alpha-d}{\alpha-2d }},\label{eq:limiting-case}
\end{align}
which gives about the same light cone as in \cref{thm:main-bound}.

However, we are proving \cref{thm:main-bound} and so cannot yet apply the light cone $t \gtrsim r^{\alpha-2d}$.
Instead, we use the light cone from Ref.~\cite{Foss-FeigG}, which is weaker than \cref{thm:main-bound}, to estimate $\mathcal V$.
Substituting this value of $\mathcal V$ into \cref{eq:bound-with-V}, we obtain a \emph{tighter} light cone than that of Ref.~\cite{Foss-FeigG}.
Iteratively using the resulting light cone to estimate $\mathcal V$ (see the SM \cite{SM} for a more detailed derivation), we obtain tighter and tighter bounds.
These bounds converge to a stable point that is exactly \cref{eq:limiting-case}.
Therefore, we obtain \cref{thm:main-bound}.

\section{Discussion}\Cref{thm:main-bound} implies a light cone that can be made arbitrarily close to $t \gtrsim r^{\alpha-2d}$ for all $\alpha \in (2d,2d+1)$.
In addition, \cref{thm:main-bound} also implies a linear light cone $t \gtrsim r^{1-o(1)}$ for $\alpha \geq 2d+1$, providing an alternative proof to Refs.~\cite{chenFiniteSpeedQuantum2019,kuwaharaStrictlyLinearLight2020} for two-body Hamiltonians.
Together with Refs.~\cite{HK,chenFiniteSpeedQuantum2019,kuwaharaStrictlyLinearLight2020}, we have the final Lieb-Robinson light cone for power-law interactions:
\begin{align}
	t \gtrsim \begin{cases}
		\log r & \text{ if } d < \alpha \leq 2d\\
		r^{\alpha-2d-o(1)} & \text{ if } 2d < \alpha \leq 2d+1\\
		r & \text{ if } \alpha > 2d+1
	\end{cases},
\end{align}
which we can saturate, up to subpolynomial corrections, using the protocol for state transfer and entanglement generation in Ref.~\cite{2020arXiv201002930T}.

Additionally, at any fixed time, our bound decays with distance as $1/r^{\alpha-d-o(1)}$.
Because the total strength of the interactions between the origin and all sites that are at distance at least $r$ from the origin already scales as $1/r^{\alpha-d}$, this so-called ``tail'' of our bound is also optimal.

Our result tightens the constraints on various quantum information tasks in power-law systems, including the growth of connected correlation functions, the generation of topological order, and the digital simulation of local observables.
Intuitively, as a local operator evolves, it is mostly constrained to lie within a light cone defined by a Lieb-Robinson bound, with total leakage outside this light cone constrained by the tail of this bound.
To simulate the dynamics of such observables, it is sufficient to simulate only the dynamics inside the light cone~\cite{tranLocalityDigitalQuantum2019a,PhysRevX.11.011020,tranHierarchyLinearLight2020a}, resulting in a more efficient simulation than simulating the entire lattice.
Similarly, the connected correlator between initially local observables remains small during the dynamics if their corresponding light cones have little overlap~\cite{BravyiHV06,GongFF,tranHierarchyLinearLight2020a}.
Topologically ordered states---those that cannot be distinguished by local observables---would also remain topologically ordered until local observables have enough time to substantially grow their supports~\cite{BravyiHV06,tranHierarchyLinearLight2020a}.
Crucially, then, \cref{thm:main-bound}, which has a provably optimal light cone and tail, provides the best-known asymptotic constraints for the dynamics of these quantities.
The mathematical details of precisely how they are bounded and the improvements that our new bound provides are detailed in the SM.

While we assume that the Hamiltonian is two-body throughout the paper, we expect the result extends to general many-body interactions. Specifically, we conjecture that \cref{thm:main-bound} holds for all Hamiltonians $H = \sum_{X\subset \Lambda} h_X$, where the sum is over all subsets of the lattice and $\sum_{X\ni i,j}\norm{h_X}\leq 1/\dist(i,j)^\alpha$ for all $i\neq j$.

Lastly, while Theorem 1 demonstrates the optimality of the single-particle state transfer protocol of \cite{2020arXiv201002930T},  other information-theoretic tasks are constrained by tighter light cones.  Our techniques may help extend recent progress \cite{tranHierarchyLinearLight2020a,kuwaharaPolynomialGrowthOutoftimeorder2020a,chen2021concentration} in constraining the remaining light cone hierarchy that has been demonstrated with power law interactions.

\begin{acknowledgments}
\section{Acknowledgments}We thank Abhinav Deshpande, Dhruv Devulapalli, Michael Foss-Feig, and Zhe-Xuan Gong for helpful discussions.
MCT, AYG, AE, and AVG acknowledge funding by the DoE ASCR Quantum Testbed Pathfinder program (award No.~DE-SC0019040), AFOSR MURI, NSF PFCQC program, AFOSR, DoE ASCR Accelerated Research in Quantum Computing program (award No.~DE-SC0020312), U.S. Department of Energy Award No.~DE-SC0019449, and ARO MURI.
MCT acknowledges additional support from the Princeton Center for Complex Materials, a MRSEC supported by NSF grant DMR 1420541.
AYG is supported by the NSF Graduate Research Fellowship Program under Grant No. DGE-1840340.
AL was supported by a Research Fellowship from the Alfred P. Sloan Foundation.
This research was performed while CLB~held an NRC Research Associateship award at the National Institute of Standards and Technology.

\end{acknowledgments}

\bibliography{my-bib}
\end{document}

% --- supplement: supp.tex ---

\title{Supplemental Material for ``The Lieb-Robinson light cone for power-law interactions''}
\date{\today}
\include{authors}
\maketitle
In this Supplemental Material, we provide a rigorous proof of \cref{lem:untruncate-lattice} in the main text (\cref{sec:main-theorem-proof}) and details on the applications of the bound to connected correlators, topologically ordered states, and simulations of local observables (\cref{sec:applications}).

\tableofcontents

\section{Proof of Theorem~\ref{lem:untruncate-lattice}} \label{sec:main-theorem-proof}
In this section, we provide a rigorous proof of \cref{lem:untruncate-lattice}.
We first summarize the lemmas we use in the proof of the theorem, followed by the proofs of the lemmas in \cref{sec:recursiveproof,sec:add-long-range-proof,sec:untruncate-proof}.

For convenience, we first recall the definitions from the main text.
We consider a $d$-dimensional lattice of qubits $\Lambda$ and, acting on this lattice, a two-body power-law Hamiltonian $H(t)$ with exponent $\alpha$.
Specifically, we assume
$H(t) = \sum_{i,j\in \Lambda} h_{ij}(t)$ is a sum of two-body terms $h_{ij}$ supported on sites $i,j$ such that $\norm{h_{ij}(t)} \leq 1/\dist(i,j)^\alpha$ for all $i\neq j$,
where $\norm{\cdot}$ is the operator norm and
$\dist(i,j)$ is the distance between $i,j$.
In this paper, we assume $2d<\alpha<2d+1$.

We use $\L$ to denote the Liouvillian corresponding to the Hamiltonian $H$, i.e. $\L\oket{O} \equiv i \oket{[H,O]}$ for all operators $O$, and use $e^{\L t} \oket{O} \equiv \oket{O(t)}$ to denote the time evolved version of the operator $O$.
Similarly to the main text, we use $\P_{r}^{(i)}\oket{O}$ to denote the projection of $O$ onto sites that are at least a distance $r$ from site $i$. In particular, if $i$ is the origin of the lattice, we may also drop the superscript $i$ and simply write $\P_r$ for brevity.

Given a unit-norm operator $O$ initially supported at the origin, $\P_r e^{\L t}\oket{O}$ provides the fraction of the time-evolved version of the operator $O$ that is supported at least a distance $r$ from the origin at time $t$.
The identity~\cite{chenFiniteSpeedQuantum2019}
\begin{align}
    \frac12 \leq \frac{\norm{\P_r e^{\L t} \oket O}}{\sup_{A} \norm{\comm{A,e^{\L t} O}}} \leq 2,
\end{align}
where the supremum is taken over all unit-norm operators $A$ supported at least a distance $r$ from $O$, establishes the equivalence between the projector and the unequal-time commutator commonly used in the Lieb-Robinson literature.
\begin{theorem}\label{lem:untruncate-lattice}
	For any $\alpha\in (2d,2d+1)$ and $\epsilon\in \left(0,\frac{(\alpha - 2 d)^2}{(\alpha - 2 d)^2 + \alpha - d}\right)$, there exist constants $c,C_1,C_2\geq 0$ such that
	\begin{align}
		\norm{\P_r e^{\L t}\oket{O}} \leq
		C_1 \left(\frac{t}{r^{\alpha-2d-\epsilon}}\right)^{\frac{\alpha-d}{\alpha-2d}-\frac{\epsilon}{2}}
		+C_2\frac{t}{r^{\alpha-d}}
	\end{align}
	holds for all $t \leq c r^{\alpha-2d-\epsilon}$.
\end{theorem}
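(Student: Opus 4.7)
The strategy is to truncate the Hamiltonian at an intermediate length scale $\chi$ (chosen at the end to optimize the bound), handle the truncated dynamics by a recursion, and add back the long-range tails via an iterated Duhamel expansion. Decompose $H = H_{\leq\chi} + V_{>\chi}$, where $H_{\leq\chi}$ collects all two-body terms $h_{ij}$ with $\dist(i,j) \leq \chi$ and $V_{>\chi}$ the rest, so $\L = \L_{\leq\chi} + \L_{>\chi}$.

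The first ingredient is a Lieb-Robinson-type bound for the truncated dynamics $e^{\L_{\leq\chi} t}$ alone. Since every term in $H_{\leq\chi}$ has range at most $\chi$, one can iterate a short-time estimate roughly $r/\chi$ times, each step letting the operator support grow by at most $\chi$, to get a rapidly-decaying bound on $\norm{\P_r e^{\L_{\leq\chi} t} \oket O}$ inside a light cone linear in $t/\chi$; this is the recursive lemma of \cref{sec:recursiveproof}. The second ingredient is the Duhamel expansion
\begin{align}
e^{\L t}\oket{O} - e^{\L_{\leq\chi} t}\oket{O} = i\int_0^t ds\, e^{\L(t-s)}\, \ad_{V_{>\chi}}\, e^{\L_{\leq\chi} s}\oket{O},
\end{align}
iterated to arbitrary order. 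Each insertion of a term from $V_{>\chi}$ can carry the operator support across a distance $\geq \chi$, at a per-insertion cost $\sim t/\chi^{\alpha-d}$ (matrix element $\sim 1/\chi^\alpha$ times a phase-space factor $\chi^d$ for target sites). Since $k\sim r/\chi$ insertions are needed to reach distance $r$, summing the resulting geometric series controls the total contribution from multi-hop processes, as formalized by the ``add-a-long-range-interaction'' lemma (\cref{sec:add-long-range-proof}) and the untruncate lemma (\cref{sec:untruncate-proof}).

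Optimizing $\chi$ to balance the truncated-dynamics error and the multi-hop Duhamel error yields the exponent $(\alpha-d)/(\alpha-2d) - \epsilon/2$ in the first term. The second term $C_2 t/r^{\alpha-d}$ is the irreducible one-hop ``leakage'': a single long-range term $h_{ij}$ with $i$ near the origin and $j$ at distance $\geq r$ carries the operator across the full distance in one shot, with the target-site sum $\sum_{j:\dist(j,0)\geq r} 1/\dist(j,0)^\alpha = O(r^{d-\alpha})$ providing the rate. I expect the main obstacle to be the careful bookkeeping of combinatorial factors in the iterated Duhamel series competing with the $r/\chi$ recursion: the slack $\epsilon$ in the exponents is exactly what absorbs the multiplicative constants that accumulate over iterations, and the time restriction $t \leq c r^{\alpha-2d-\epsilon}$ reflects the radius of convergence of the Duhamel series at the optimal $\chi$. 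The window $\alpha \in (2d, 2d+1)$ is precisely the regime where this truncation strategy improves on the bare one-hop bound without yet hitting the linear light cone already established for $\alpha > 2d+1$.
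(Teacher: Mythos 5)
There is a genuine gap: your single truncation scale $\chi$ cannot reach the claimed light cone. For the truncated dynamics, a Lieb--Robinson bound for $H_{\leq\chi}$ obtained by iterating a short-time estimate has velocity $v\propto\chi$ (as in the paper's \cref{eq:bound-for-ell_1}), hence a cone $t\gtrsim r/\chi$, while the single-insertion long-range error scales like $\int_0^t ds\,(\chi s)^d\chi^{d-\alpha}\sim t^{d+1}\chi^{2d-\alpha}$; balancing these only reproduces a Foss-Feig-type cone $t\gtrsim r^{(\alpha-2d)/(\alpha-d+1)}$, not $t\gtrsim r^{\alpha-2d}$. The paper avoids this by a multi-scale decomposition $\ell_k=L^k$ and the recursion of \cref{lem:recusive}, whose outcome is that at the top scale $\ell_n\sim r$ the effective velocity is $\sim \ell_n^{2d+1-\alpha}$ up to polylogarithmic factors (\cref{eq:vn}), i.e.\ parametrically smaller than the $\sim\chi$ you would get from one scale; this hierarchy is the essential idea your plan is missing. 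Your treatment of the long-range tail is also off in mechanism: a term of $V_{>\chi}$ can hop any distance (the hop lengths are heavy-tailed), so the picture of $k\sim r/\chi$ hops of length $\chi$ summed as a geometric series does not control the dominant processes, and it provides no route to the exponent $\frac{\alpha-d}{\alpha-2d}$. The paper never iterates Duhamel to high order: it uses a single insertion, weights the source site by the projected norm $\norm{\P_{\dist(i,0)}e^{\L_n s}\oket O}$, and then bootstraps the resulting bound through \cref{lem:add-long-range-recursive} (seeded by an existing bound), so that the exponent emerges as the fixed point of the map $\gamma\mapsto \gamma d/(\alpha-d)+1-\eta(\alpha-d)$, namely $\frac{\alpha-d}{\alpha-2d}$ as $\eta\to0$. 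The time restriction $t\leq cr^{\alpha-2d-\epsilon}$ comes from requiring $v_nt\lesssim r$ (and the conditions accumulated in the bootstrap), not from convergence of an infinite Duhamel series.

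Some elements of your sketch do track the paper: the overall architecture (truncate, bound the short-range dynamics recursively, add the tail perturbatively) is right, and your attribution of the additive term $C_2 t/r^{\alpha-d}$ to a single long hop is correct in spirit. In the paper, however, that term arises in the lattice-size-removal step (\cref{sec:untruncate-proof}): one first proves \cref{lem:bound-finite-lattice} with $\log r_*$ factors, then truncates $H$ outside a ball of radius $r_0=r^{\xi}$ with $\xi\geq\frac{\alpha-d}{\alpha-2d}$, which produces $t/r_0^{\alpha-2d}\leq t/r^{\alpha-d}$ and is also the origin of the restriction $\epsilon<\frac{(\alpha-2d)^2}{(\alpha-2d)^2+\alpha-d}$ and of the $\epsilon/2$ in the first exponent (the residual $\log^\kappa r$ is absorbed into $r^{\epsilon/2}$). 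Without the multi-scale recursion and the self-consistent iteration of the long-range lemma, the proposal as written would not yield the stated theorem.
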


Our strategy is to divide the terms of the Hamiltonian by their interaction range and prove a Lieb-Robinson-like bound recursively for each range.
Specifically, let $\ell_0 = 0$ and $\ell_k \equiv L^k$ for $k = 1,2,\dots,n$, where $L>1$ to be chosen later,
\begin{align}
	n = \left\lfloor{\frac{1}{\log L}\log\left[r \left(\frac{t}{r^{\alpha-2d}}\right)^\eta\right]}\right\rfloor,\label{eq:n-value}
\end{align}
and $\eta \in (0,\frac{1}{\alpha-d})$ is an arbitrary small constant.
For our convenience, we set $\ell_{n+1} = r_*$, where $r_*$ is the diameter of the lattice.
We then divide the Hamiltonian into $H = \sum_{k=1}^{n+1}V_k$, where $V_k = \sum_{i,j:\ell_{k-1}<\dist(i,j)\leq \ell_k}h_{ij}$ consists of terms $h_{ij}$ such that the distance between $i,j$ is between $\ell_{k-1}$ and $\ell_k$.
We also use $H_k = \sum_{j=1}^{k} V_k$ to denote the sum of interactions whose lengths are at most $\ell_k$ and $\L_k = i [H_k,\cdot]$ are the corresponding Liouvillians.
Note that $H_{n+1} = H$ contains every interaction of the Hamiltonian.

We start with a standard Lieb-Robinson bound for $H_1$~\cite{HK,chenOperatorGrowthBounds2019}, i.e.
\begin{align}
	\norm{\mathbb P_r e^{\L_1 t}\oket{O}} \leq  \exp\left[\frac{v_1 t- r}{{\ell_1}}\right], \label{eq:bound-for-ell_1}
\end{align}
where $v_1 = 4e\tau \ell_1$ is proportional to $\ell_1$ and $\tau = \max_{i} \sum_{j\in \Lambda, j\neq i} 1/\dist(i,j)^\alpha$ is a constant for all $\alpha>d$, and recursively prove bounds for $H_2,H_3,\dots,H_n$ using the following lemma:

\begin{lemma}\label{lem:recusive}
	Suppose for ${\ell_k}\geq 1$, we have
	\begin{align}
		\norm{\mathbb P_r e^{\L_k t}\oket{O}} \leq  \exp\left[\frac{v_k t- r}{{\ell_k}}\right], \label{eq:bound-for-k}
	\end{align}
	for some unit-norm operator $O$ supported at the origin.
	Then for ${\ell_{k+1}} > {\ell_{k}}$, we have
	\begin{align}
		\norm{\mathbb P_r e^{\L_{k+1} t}\oket{O}} \leq \exp\left[\frac{v_{k+1} t - r}{{\ell_{k+1}}}\right]. \label{eq:bound-for-k+1}
	\end{align}
	where
	\begin{align}
		v_{k+1} = \xi\log(r_*) v_k + \nu \lambda \frac{\ell_{k+1}^{2d+1}}{\ell_k^\alpha}
	\end{align}
	and $\xi,\nu,\lambda$ are constants that may depend only on $d$.
\end{lemma}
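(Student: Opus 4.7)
The plan is to treat $V_{k+1}$ as a long-range perturbation on top of $H_k$ and propagate the inductive bound \cref{eq:bound-for-k} through Duhamel's principle. Expanding $e^{\L_{k+1} t}$ as a Dyson series in $\L_{V_{k+1}}$ gives
\begin{align*}
e^{\L_{k+1} t}\oket{O} = \sum_{q=0}^{\infty} \int_{0\leq s_1\leq \cdots\leq s_q\leq t} e^{\L_k(t-s_q)}\L_{V_{k+1}}e^{\L_k(s_q-s_{q-1})}\cdots \L_{V_{k+1}}e^{\L_k s_1}\oket{O}\,ds_1\cdots ds_q,
\end{align*}
in which the $q$-th summand describes an operator undergoing $q$ ``long-range jumps'' of length at most $\ell_{k+1}$ through terms of $V_{k+1}$, interleaved with $q+1$ short-range evolution windows under $H_k$ on length scale $\ell_k$ with velocity $v_k$. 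Writing $\L_{V_{k+1}}=i\sum_{i,j}[h_{ij},\cdot]$ with $\|h_{ij}\|\leq 1/\ell_k^\alpha$ and $\ell_k<\dist(i,j)\leq \ell_{k+1}$, each insertion contributes a factor proportional to the local coupling strength, which is bounded by the number of reachable partners $\O{\ell_{k+1}^d}$ divided by $\ell_k^\alpha$.

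The first step is to apply the inductive bound \cref{eq:bound-for-k} to each short-range segment, translating the projector onto the source region of the next jump into an exponential suppression in (distance traveled) minus $v_k\cdot$(duration of segment), all divided by $\ell_k$. Iterating across the $q$ jumps and using the triangle inequality to accumulate distance constraints, one finds that an operator initially at the origin can have nonzero weight on $\P_r$ only if $q\ell_{k+1}+v_k t \gtrsim r$, forcing a lower cutoff $q_* = \O{(r - v_k t)/\ell_{k+1}}$ on the sum.

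The second step is to perform the time integrals on the $q$-simplex, which contribute $t^q/q!$, and carry out the spatial sums over intermediate landing sites. For each jump the location of the landing site can range over the lattice, so a union bound produces a polynomial-in-$r_*$ factor per jump; combining this with the short-range exponential suppression between consecutive jumps tames the polynomial to an effective $\log(r_*)$ penalty per step, which is the origin of the $\xi\log(r_*)v_k$ contribution. Similarly, each jump picks up a factor of order $\ell_{k+1}^{d}\cdot (\ell_{k+1}^d/\ell_k^\alpha) \cdot \ell_{k+1}$ once one accounts for the number of source-target pairs, their individual strengths, and the jump displacement, giving the $\nu\lambda\,\ell_{k+1}^{2d+1}/\ell_k^\alpha$ contribution. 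Summing over $q\geq q_*$ and matching to the target form $\exp[(v_{k+1}t-r)/\ell_{k+1}]$ yields the advertised $v_{k+1}$.

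The main obstacle I anticipate is the careful bookkeeping required to balance three competing effects: (i) the combinatorial proliferation of jump sequences through the lattice, (ii) the Lieb-Robinson suppression \cref{eq:bound-for-k} between successive jumps, and (iii) the $1/q!$ suppression from the time-ordered simplex. Getting these to combine so that each recursive step introduces only a \emph{logarithmic} factor of $r_*$ in the velocity, rather than a polynomial, is the technical heart of the argument and is what ultimately makes the $n$-fold iteration in \cref{eq:n-value} yield a useful bound. A secondary subtlety is that the projector $\P_r$ on the left and the intermediate-site projectors used inside the Dyson expansion must be chosen consistently so that \cref{eq:bound-for-k} can be applied term by term without losing the short-range suppression.
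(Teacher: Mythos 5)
Your proposal takes a genuinely different route from the paper: you iterate Duhamel to get a full Dyson series in $\L_{V_{k+1}}$ and try to control the sum over the number $q$ of long-range jumps, whereas the paper passes to the interaction picture of $H_k$, decomposes each $h_{ij}$ into single-site legs, coarse-grains the lattice into hypercubes of side $R\sim\ell_{k+1}$, shows the interaction-picture Hamiltonian is an exponentially decaying interaction between hypercubes (so the standard Lieb--Robinson bound applies to it), and then merges the two evolutions with a separate combination lemma. However, as written your sketch has a genuine gap at exactly the point you flag as ``the technical heart'': the claim that the polynomial-in-$r_*$ union-bound factors from the intermediate landing sites are ``tamed to an effective $\log(r_*)$ penalty per step'' by the short-range exponential suppression is asserted, not derived, and it is not the mechanism that actually produces the logarithm. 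Summing an exponential $e^{-x/\ell_k}$ against polynomially many sites at distance $x$ yields a constant depending on $\ell_k$, not a logarithm of the system size; conversely, if the volume factors survive uncancelled you get a velocity that is polynomial in $r_*$, which destroys the recursion. In the paper the $\xi\log(r_*)\,v_k$ term has a completely different origin: the effective-interaction bound is only valid for short times $t\lesssim \ell_{k+1}/v_k$, and extending it to all times by repeated self-composition (\cref{coro:multitime}) turns the accumulated polynomial prefactors $r^{d+1}$ into a factor $e^{O(\log r_*)\, t v_k/\ell_{k+1}}$, i.e.\ a multiplicative $\log r_*$ enhancement of $v_k$. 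Your sketch has no analogue of this step and no substitute mechanism.

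A second, related gap: the inductive hypothesis \cref{eq:bound-for-k} applies to an operator supported at the origin, but after the first jump (and after each short-range window) the operator is supported on a region of radius up to $v_k t + m\ell_{k+1}$. To apply the bound to each subsequent segment you need an extension to multi-site operators, which costs a factor of the support volume per segment (this is why the paper invokes Lemma 4 of Ref.~\cite{tranHierarchyLinearLight2020a} inside \cref{lem:combine-two-bounds} and only ever composes \emph{two} evolutions at a time, absorbing the resulting $r^{d+1}$ prefactor via the doubling argument). In your $q$-fold Dyson expansion these volume factors multiply across all $q+1$ segments, and the $t^q/q!$ from the simplex does not obviously beat $(r_*^{d})^{q}$ in the regime of interest; you would need to show how the distance constraint $q\ell_{k+1}+v_kt\gtrsim r$ and the per-jump strength $\ell_{k+1}^{d}/\ell_k^{\alpha}$ conspire to control this, which is precisely the bookkeeping the proposal defers. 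Until those two points are resolved, the derivation of $v_{k+1}=\xi\log(r_*)v_k+\nu\lambda\,\ell_{k+1}^{2d+1}/\ell_k^{\alpha}$ does not go through.
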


Note that each of the bounds in the series has a logarithmic dependence on the diameter $r_*$ of the lattice.
We later show that this dependence on $r_*$ can be replaced by a similar logarithmic dependence on $r$, leading to a logarithmic correction in the light cone.
After applying \cref{lem:recusive} $n-1$ times, we arrive at a bound for the evolution under $H_n$:
\begin{align}
	\norm{\mathbb P_r e^{\L_n t}\oket{O}} \leq  \exp\left[\frac{v_n t- r}{{\ell_n}}\right],\label{eq:bound-n}
\end{align}
where
\begin{align}
	v_n = x^{n-1}(v_1-L^{2d+1}\nu\lambda) +x^{n-1}L^{2d+1}\nu\lambda\left[1+\frac{L^{2d+1-\alpha}}{x}+\dots \left(\frac{L^{2d+1-\alpha}}{x}\right)^{n-1}\right]
\end{align}
and $x \equiv \xi \log r_*$.
We now choose $L = x^{1/(2d+1-\alpha)}$ so that
\begin{align}
	v_n = x^{n-1} [v_1 + (n-1)L^{2d+1}\nu\lambda ].\label{eq:vn-def}
\end{align}

At this point, we have a bound for the evolution under $H_n$, which contains most terms of the Hamiltonian except for those with range larger than $\ell_n$.
With the value of $n$ in \cref{eq:n-value}, we eventually show that
the bound \cref{eq:bound-n} has the desired light cone $t \gtrsim r/v_n \sim r^{\alpha-2d}$.

Next, we add the remaining long-range interactions in $H - H_n$, i.e. those with range larger than $\ell_n$, to the bound.
The result is the following lemma, which we prove in \cref{sec:add-long-range-proof}.
\begin{lemma}\label{lem:bound-finite-lattice}
	Given any $\epsilon>0$, there exist constants $C,c,\kappa,\delta$ such that
	\begin{align}
		\norm{\P_r e^{\L t}\oket{O}} \leq C \log^\kappa r_* \left(\frac{t}{r^{\alpha-2d-\epsilon}}\right)^{\frac{\alpha-d}{\alpha-2d}-\epsilon}
	\end{align}
	holds for all $t \leq c r^{\alpha-2d-\epsilon}/\log^\delta r_*$.
\end{lemma}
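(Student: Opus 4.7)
The plan is to prove \cref{lem:bound-finite-lattice} by starting from the bound \cref{eq:bound-n} for the truncated Hamiltonian $H_n$ and using Duhamel's principle to incorporate the long-range tail $H - H_n$, which consists of all interactions of range exceeding $\ell_n$. With the choice of $\ell_n$ encoded in \cref{eq:n-value}, the exponential bound from \cref{eq:bound-n} stays negligible while the perturbative long-range correction contributes the power-law term of the claimed form; balancing the two contributions will produce the exponent $(\alpha-d)/(\alpha-2d)-\epsilon$.

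\textbf{Duhamel step.} Starting from
\[
e^{\L t}\oket{O} = e^{\L_n t}\oket{O} + \int_0^t ds\, e^{\L(t-s)}\,(\L - \L_n)\, e^{\L_n s}\oket{O},
\]
I would project with $\P_r$, take operator norms, and use that $\P_r$ is bounded and $e^{\L(t-s)}$ preserves the operator norm, to get
\[
\norm{\P_r e^{\L t}\oket{O}} \lesssim \norm{\P_r e^{\L_n t}\oket{O}} + \int_0^t ds \sum_{\dist(i,j) > \ell_n} \norm{[h_{ij}, e^{\L_n s} O]}.
\]
The first summand is controlled directly by \cref{eq:bound-n} and produces $\exp[(v_n t - r)/\ell_n]$ once the parameters are inserted.

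\textbf{Long-range correction.} For each commutator I would apply the standard bound $\norm{[h_{ij}, Y]} \leq 2\norm{h_{ij}}\, \norm{\P^{(0)}_{r_{ij}} Y}$, where $r_{ij} \equiv \min(\dist(0,i),\dist(0,j))$, since the commutator only sees the part of $Y$ acting nontrivially on $\{i,j\}$, which necessarily reaches at least distance $r_{ij}$ from the origin. I would then split the sum over pairs at the ballistic cutoff $r_{ij} \approx v_n s$: the nearby pairs contribute at most $(v_n s)^d \ell_n^{d-\alpha}$ by combining the volume of a ball of radius $v_n s$ with the tail sum $\sum_{\dist(i,j) > \ell_n} \dist(i,j)^{-\alpha}$, while the far pairs are suppressed by the tail bound $\exp[(v_n s - r_{ij})/\ell_n]$ from the recursion and produce a convergent series dominated by the shell $r_{ij} \approx v_n s$. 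Using $\norm{h_{ij}} \leq \dist(i,j)^{-\alpha}$ and integrating over $s$ yields a bound that is polynomial in $v_n t$ and $\ell_n^{d-\alpha}$.

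\textbf{Optimization and main obstacle.} Balancing the exponential piece against the long-range correction gives the optimal $\ell_n \sim r(t/r^{\alpha-2d})^\eta$ already encoded in \cref{eq:n-value}; substituting this choice and the expression \cref{eq:vn-def} for $v_n$ converts the polynomial bound into the claimed form $(t/r^{\alpha-2d-\epsilon})^{(\alpha-d)/(\alpha-2d) - \epsilon}$. The main obstacle is the accounting of the $\log r_*$ factors, which enter through $v_n \sim x^{n-1} n L^{2d+1}$ with $x = \xi \log r_*$ and $L = x^{1/(2d+1-\alpha)}$: because the exponent of $\log r_*$ in $v_n$ scales with $n$, and $n$ itself grows logarithmically in $r$, I must carefully verify that the restriction $t \leq c r^{\alpha-2d-\epsilon}/\log^\delta r_*$ suffices to keep $\exp[(v_n t - r)/\ell_n]$ small, and that the $\epsilon$ slack in the exponent leaves enough room to absorb the residual logarithms into the $\log^\kappa r_*$ prefactor. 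This is the delicate bookkeeping step of the argument; everything else is essentially Duhamel plus volume counting.
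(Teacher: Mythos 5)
Your Duhamel decomposition is the same as the paper's, but the way you control the long-range correction cannot reach the claimed exponent, and this is a genuine gap rather than bookkeeping. You bound $\norm{\P_{r_{ij}}e^{\L_n s}\oket{O}}$ using the exponential bound \cref{eq:bound-n} and split the site sum at the ballistic radius $v_n s$. The problem is that \cref{eq:bound-n} only decays on the scale $\ell_n \sim r\,(t/r^{\alpha-2d})^\eta$, which is comparable to $r$ itself; consequently the sum over sites $i$ outside the ballistic ball contributes a shell of volume $\sim (v_n s+\ell_n)^d\gtrsim \ell_n^d$, and your long-range correction is bounded below in order of magnitude by $t\,\ell_n^{d}\cdot\ell_n^{d-\alpha} = t\,\ell_n^{2d-\alpha}\sim (t/r^{\alpha-2d})^{1-\eta(\alpha-2d)}$ (the near-field piece $(v_nt)^d t\,\ell_n^{d-\alpha}\sim(t/r^{\alpha-2d})^{d+1-O(\eta)}$ is subdominant). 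Since for $\alpha\in(2d,2d+1)$ the target exponent is $\tfrac{\alpha-d}{\alpha-2d}=1+\tfrac{d}{\alpha-2d}>d+1>1$ and $t/r^{\alpha-2d}\le 1$ in the relevant regime, an exponent of $1$ is strictly weaker: it reproduces only the $1/r^{\alpha-2d}$ tail of the older bound (B3), not the $1/r^{\alpha-d}$ tail that is the content of this lemma. No choice of cutoff or log accounting fixes this, because the exponential bound simply carries no useful information about the operator's weight at intermediate distances $r_{ij}\ll \ell_n$.

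What the paper does instead is a bootstrap you have not included. It formulates \cref{lem:add-long-range-recursive}: assume an already-established \emph{power-law} bound $\norm{\P_\rho e^{\L s}\oket{O}}\lesssim \log^\kappa r_*\, s^\gamma/\rho^\beta$ with $\beta>d$, use \emph{that} bound (not \cref{eq:bound-n}) to control $\norm{\P_{\dist(i,0)}e^{\L_n s}\oket{O}}$ in the Duhamel integrand, and split the sum over $i$ at the radius $\dist(i,0)\sim s^{\gamma/\beta}$ where the assumed bound becomes trivial. Both pieces then scale as $s^{\gamma d/\beta}$, yielding an improved exponent $\gamma'=\gamma d/\beta+1-\eta(\alpha-d)$. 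The iteration is seeded by the Foss-Feig et al.\ bound \cref{eq:mike}, and because $d/\beta<1$ the map $\gamma\mapsto \gamma d/\beta+1-\eta(\alpha-d)$ contracts to the fixed point $\tfrac{\alpha-d}{\alpha-2d}$ as the number of iterations grows and $\eta\to0$; finitely many iterations give $\tfrac{\alpha-d}{\alpha-2d}-\epsilon$. Your proposal contains neither the external power-law seed nor the iteration, and both are essential: the exponent in the lemma is precisely the fixed point of this recursion, not something obtainable from a single Duhamel pass against \cref{eq:bound-n}.
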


The bound at this point still has an undesirable feature: it depends on the size of the lattice $r_*$.
Finally, we show in \cref{sec:untruncate-proof} that we can remove this dependence on $r_*$ at the cost of adding additional terms to the bound.
The result is \cref{lem:untruncate-lattice} presented in the main text.

\subsection{Proof of Lemma~\ref{lem:recusive}}
\label{sec:recursiveproof}
In this section, we prove \cref{lem:recusive}.
\begin{proof}
For simplicity, let $V \equiv V_{k+1} = H_{k+1} - H_{k}$ in this section.
We shall move into the interaction picture of $H_k$ and write the time evolution under $H_{k+1}$ as a product
\begin{align}
	\mathcal T \exp\left({-i\int_0^t ds\ H_{k+1}(t) }\right) = \mathcal T \exp\left({-i\int_0^t ds\ H_{k}(t) }\right).\mathcal T \exp\left({-i\int_0^t ds\ e^{\L_k s} V }\right)
\end{align}
of an evolution under $H_k$, for which \cref{eq:bound-for-k} applies, and an evolution $e^{\L_I t}$ under the $V_I(t) = e^{\L_k t} V$.

We decompose every term $h_{ij}$ in $V$ into a sum of products of two single-site operators $u_i^{(\mu)}$:
\begin{align}
	h_{ij} = \sum_{\mu}J_{ij}^{(\mu)} u_{i}^{(\mu)} u_{j}^{(\mu)},
\end{align}
where $u_i^{(\mu)}$ have unit norms, $J_{ij}^{(\mu)}$ are nonnegative, and $\sum_{\mu} J_{ij}^{(\mu)}\leq 1/\dist(i,j)^\alpha$.
In doing so, we can reduce the evolution of $h_{ij}$ into the evolutions of single-site operators $u_{i}^{(\mu)}$:
\begin{align}
	e^{\L_k t} h_{ij}
	&= \sum_{\mu} J_{ij}^{(\mu)}\left[e^{\L_k t} u_{i}^{(\mu)}\right]\left[ e^{\L_k t} u_{j}^{(\mu)}\right].
\end{align}

We then pick a parameter $R\geq \ell_k$ and divide the lattice around $i$ into shells of width $R$.
Specifically, let $\B_{r}^{(i)}$ denote the ball of radius $r$ centered on $i$.
Let $\S_{r}^{(i)} = \B_{r}^{(i)}\setminus \B_{r-R}^{(i)}$ denote the shell of inner radius $r-R$ and outer radius $r$ centered on $i$.
For each $\mu$, we have
\begin{align}
	  e^{\L_{k} t} u_{i}^{(\mu)}
	  &= \left[(\mathbb I - \P^{(i)}_{R}) + (\P^{(i)}_{ R} - \P^{(i)}_{ 2R})+ (\P^{(i)}_{ 2R} - \P^{(i)}_{3R}) +\dots \right] e^{\L_k t} u_{i}^{(\mu)}
 	  \equiv \sum_{q=0}^{\infty}u_{i,q}^{(\mu)}(t),
 \end{align}
 where the distance in the subscript of the projectors is with respect to $i$ and $u_{i,q}^{(\mu)}$ is supported on $\B_{(q+1)R}^{(i)}$ for $q = 0,1,2,\dots$.

Using \cref{eq:bound-for-k} and the triangle inequality, we can show that
\begin{align}
	\norm{u_{i,q}^{(\mu)}(t)}
	\leq \norm{\P^{(i)}_{qR} e^{\L_kt} u_i^{(\mu)}} + \norm{\P^{(i)}_{(q+1)R} e^{\L_kt} u_i^{(\mu)}}
	&\leq \exp\left(\frac{v_k t  - qR}{\ell_k}\right) + \exp\left(\frac{v_k t - (q+1)R}{\ell_k}\right).
\end{align}
Choosing $R \geq v_k t$ and $ R \geq (1+\epsilon)\ell_k$ for some positive constant $\epsilon$, we have
\begin{align}
	\norm{u_{i,q}^{(\mu)}(t)}
	\leq e^{\frac{-(q-1)R}{\ell_k}} + e^{\frac{-qR}{\ell_k}}
	\leq e^{-(q-1)(1+\epsilon)} + e^{-q(1+\epsilon)}
	\leq (1+e^{1+\epsilon})e^{-q (1+\epsilon)} \label{eq:bound-one-dumbbell}
\end{align}
for all $q = 0,1,2,\dots$.
By combining the two legs of $h_{ij}$ together, we arrive at a decomposition $e^{\L_k t} h_{ij} = \sum_{p,q} w_{i,p;j,q}(t)$,
where $w_{i,p;j,q}(t) = \sum_{\mu} J_{ij}^{(\mu)}u_{i,p}^{(\mu)}(t)u_{j,q}^{(\mu)}(t)$ and
\begin{align}
	\norm{w_{i,p;j,q}(t)} \leq \frac{(1+e^{1+\epsilon})^2}{\dist(i,j)^\alpha} e^{-(p+q)(1+\epsilon)}.
\end{align}

Next, we divide the lattice into complementary hypercubes of length $R$. We shall prove that $V_I(t)$ actually consists of exponentially decaying interactions between hypercubes.
We shall index the hypercubes by their centers, i.e. $\C_x$ denotes the hypercube center at $x$.
Given $x,y$ as the centers of two hypercubes,
\begin{align}
	\tilde h_{xy}(t) \equiv \sum_{\substack{i,j,p,q\\
	\B_{(p+1)R}^{(i)}\cap \C_x \neq \varnothing\\
	\B_{(q+1)R}^{(j)}\cap \C_y \neq \varnothing\\
	}} w_{i,p;j,q}(t)
\end{align}
defines the effective interaction between the cubes $\C_x$ and $\C_y$.
Note that $\sum_{x,y}\tilde h_{xy}\neq V_I$ because some $w_{i,p;j,q}$ might be double counted.
The conditions $\B_{(p+1)R}^{(i)}\cap \C_x \neq \varnothing$ and $
	\B_{(q+1)R}^{(j)}\cap C_y \neq \varnothing$ ensure that we account for all terms $w_{i,p;j,q}(t)$ whose support might overlap with the cubes $\C_x,\C_y$ (\cref{fig:eff-int}).
These conditions, together with $\dist(i,j)\leq \ell_{k+1}$, can be relaxed to
\begin{enumerate}
	\item $\dist(i,x) \leq (p+1)R + R \frac{\sqrt d}{2},$
	\item $\dist(j,y) \leq (q+1)R + R \frac{\sqrt d}{2},$ and
	\item $\dist(x,y) \leq (p+1)R + R\frac{\sqrt{d}}{2} + \ell_{k+1}+ (q+1)R +R \frac{\sqrt d}{2} ,$
\end{enumerate}
where $(p+1)R$ and $(q+1)R$ are the radii of the balls around $i$ and $j$, $R \sqrt{d}/2$ is the maximum distance between the center and the corner of a hypercube, and the middle term $\ell_{k+1}$ comes from the maximum distance between $i$ and $j$.

\begin{figure}
\includegraphics[width=0.45\textwidth]{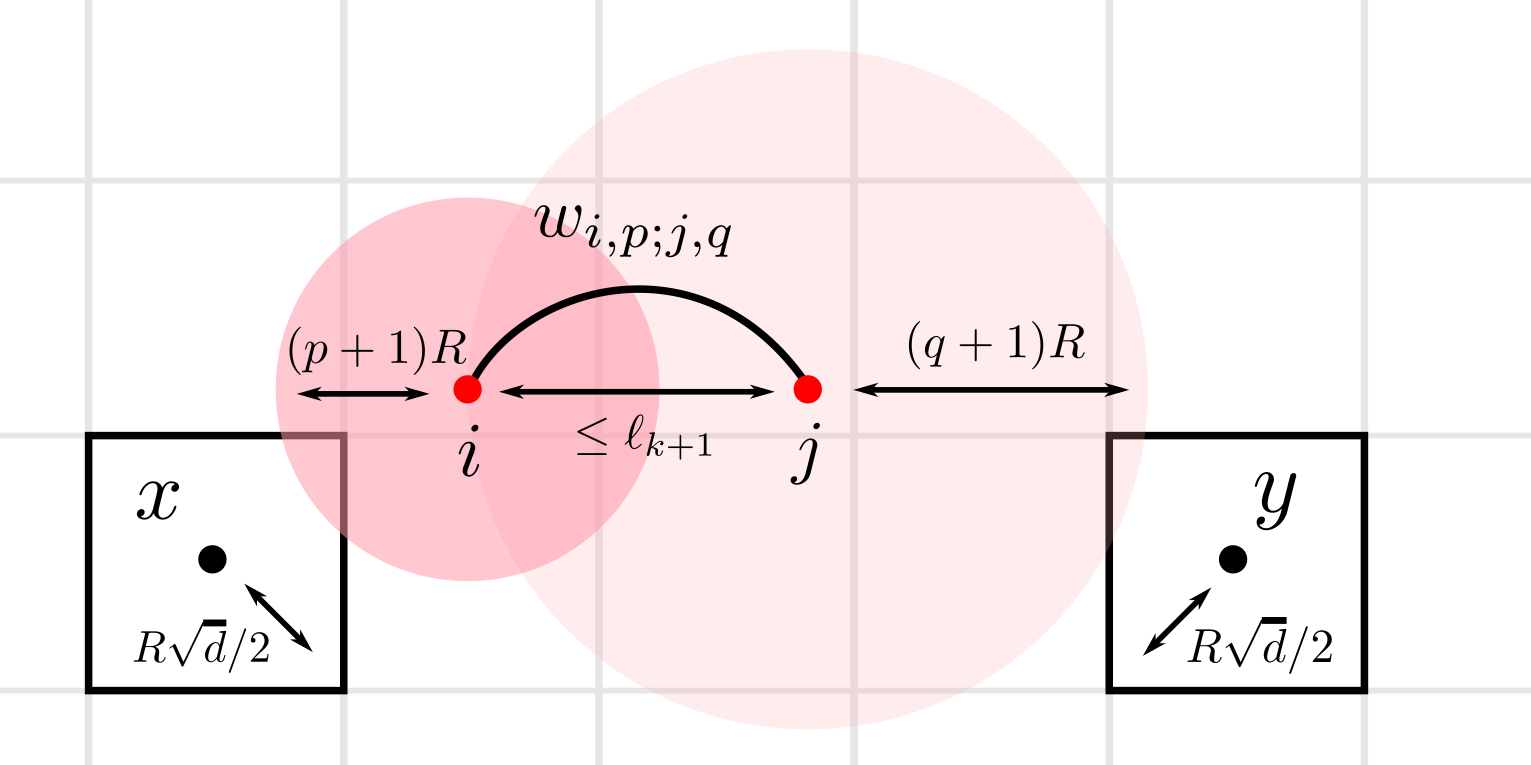}
\caption{The effective interaction between two hypercubes $\C_x$ and $\C_y$ comes from the terms $w_{i,p;j,q}$ whose support (the shaded area) overlaps with the cubes. }
\label{fig:eff-int}
\end{figure}

We bound the norm of $\tilde h_{xy}(t)$ using the triangle inequality and relax the conditions for $i,j,p,q$ as we discussed above:
\begin{align}
	\norm{\tilde h_{xy}(t)} &\leq \sum_{\substack{p,q,i,j\\(1),(2),(3)}}\norm{w_{i,p;j,q}(t)}
	\leq  \sum_{\substack{p,q,i,j\\(1),(2),(3)}} \frac{(1+e^{1+\epsilon})^2}{\dist(i,j)^\alpha} e^{-(p+q)(1+\epsilon)},
\end{align}
where the subscript $(1), (2), (3)$ of the sum refers to the three conditions above, respectively.
Since $\dist(i,j)\geq \ell_k$, we can simplify the bound and carry out the sums over $i,j$:
\begin{align}
	\norm{\tilde h_{ij}(t)} & \leq \frac{(1+e^{1+\epsilon})^2}{\ell_k^\alpha} \sum_{\substack{p,q,i,j\\(1),(2),(3)}}  e^{-(p+q)(1+\epsilon)}\\
	&\leq \frac{(1+e^{1+\epsilon})^2}{\ell_k^\alpha} \sum_{\substack{p,q\\(3)}} 4^d
	\left(R+pR + R\frac{\sqrt d}{2}\right)^d
	\left(R+qR + R\frac{\sqrt d}{2}\right)^d e^{-(p+q)(1+\epsilon)}\\
	&= \frac{(1+e^{1+\epsilon})^2}{\ell_k^\alpha} (2R)^{2d}\sum_{\substack{p,q\\(3)}}
	\left(p + 1+ \frac{\sqrt d}{2}\right)^d
	\left(q + 1+\frac{\sqrt d}{2}\right)^d
	e^{-(p+q)(1+\epsilon)}.
\end{align}
We then use use the following identity to simplify the expression:
For every $\epsilon>0$,
\begin{align}
	x^d \leq g_\epsilon e^{\epsilon x} \label{eq:power-to-exp}
\end{align}
holds for all $x\geq 0$, where $g_\epsilon = d!/\epsilon^d$.
Therefore, we can  bound
\begin{align}
	 \left(p + 1+ \frac{\sqrt d}{2}\right)^d
	 \leq g_\epsilon e^{\epsilon+\epsilon\frac{\sqrt d}{2}}e^{\epsilon p}.
\end{align}
Substituting back to the earlier equation, we have
\begin{align}
	\norm{\tilde h_{xy}(t)}
	\leq g_\epsilon^2 e^{2\epsilon+2\epsilon\frac{\sqrt d}{2}} \frac{(1+e^{1+\epsilon})^2}{\ell_k^\alpha} (2R)^{2d} \sum_{\substack{ p,q\\(3)}} e^{-(p+q)(1+\epsilon-\epsilon)}
	\leq \frac{\tilde{ g}_\epsilon}{\ell_k^\alpha} R^{2d} \sum_{\substack{ p,q\\(3)}} e^{-(p+q)},
\end{align}
where $\tilde{g}_\epsilon$ absorbs all constants that depend only on $\epsilon$ and $d$.
Recall that condition $(3)$ is equivalent to
\begin{align}
	p+q \geq \frac{\dist(x,y)}{R} - 2-\sqrt{d} - \frac{\ell_{k+1}}{R} \equiv a.
\end{align}

We consider two cases.
For $q \geq a$, the sum over $p$ can be taken from 0 to $\infty$:
\begin{align}
	\frac{\tilde{ g}_\epsilon}{\ell_k^\alpha} R^{2d} \sum_{q\geq a}\sum_{p\geq 0} e^{-(p+q)}
	&\leq  \frac{\tilde{ g}_\epsilon}{\ell_k^\alpha} R^{2d} e^{-a+1} \sum_{q\geq 0}\sum_{p\geq 0} e^{-(p+q)}
	= \frac{e^3 \tilde g_\epsilon}{(e-1)^2\ell_k^\alpha} R^{2d} e^{-a }\nonumber\\
	&= \frac{e^3 \tilde g_\epsilon}{(e-1)^2} e^{2+\sqrt{d}} e^{\frac{\ell_{k+1}}{R}} \frac{R^{2d}}{\ell_k^\alpha} e^{-\frac{\dist(i,j)}{R} }.\label{eq:pq1}
\end{align}
For $q<a$, we sum over $p\geq a-q$:
\begin{align}
	 \frac{\tilde{ g}_\epsilon}{\ell_k^\alpha} R^{2d} \sum_{q<a}\sum_{p\geq a-q} e^{-(p+q)}
	 \leq \frac{e^2}{e-1} \frac{\tilde{ g}_\epsilon}{\ell_k^\alpha} R^{2d} g_\epsilon e^{-(1-\epsilon)a}
	 &\leq \frac{e^2 \tilde g_\epsilon g_\epsilon e^{2+\sqrt{d}}}{e-1}
	 e^{\frac{\ell_{k+1}}{R}}\frac{R^{2d}}{\ell_k^\alpha}  e^{-(1-\epsilon)\frac{\dist(i,j)}{R}}
	 ,\label{eq:pq2}
\end{align}
where we have used the identity \cref{eq:power-to-exp} again with $d\geq 1$ and $\epsilon>0$ having the same value as before.

Combining \cref{eq:pq1,eq:pq2}, we have
\begin{align}
	 \norm{\tilde h_{xy}(t)}
	 &\leq \underbrace{
		 \tilde g_\epsilon \frac{e^2}{e-1} e^{2+\sqrt{d}} \left(\frac{e}{e-1}+g_\epsilon\right) e^{\frac{\ell_{k+1}}{R}}\frac{R^{2d}}{\ell_k} }_{ = \mathcal E_0}
	 e^{-(1-\epsilon)\frac{\dist(x,y)}{R}}.
\end{align}
Note that $\frac{\dist(x,y)}{R}$ is the rescaled distance between the hypercubes $\C_x$ and $\C_y$.
Therefore, the interaction between the hypercubes decays exponentially with the rescaled distance between them.
Using the standard Lieb-Robinson bound for exponentially decaying interactions, there exists a constant $\nu$ such that
\begin{align}
	\norm{\P_{r} e^{\L_I t} \oket{O}} \leq  \Exp{\nu \mathcal E_0 t - \frac{(1-\epsilon)r}{R}}
\end{align}
for any unit-norm operator $O$ supported on a single hypercube (including operators supported on single sites.)
We now choose $R = (1-\epsilon) \ell_{k+1}$ and rewrite
\begin{align}
	\mathcal E_0 =  \underbrace{
		\tilde g_\epsilon \frac{e^2}{e-1} e^{2+\sqrt{d}} \left(\frac{e}{e-1}+g_\epsilon\right)
		e^{\frac{1}{1-\epsilon}} (1-\epsilon)^{2d}
		}_{\equiv \lambda}   \frac{\ell_{k+1}^{2d}}{\ell_k^\alpha},
\end{align}
where the constant $\lambda$ depends only on $\epsilon$ and $d$.
Plugging this expression into the earlier bound, we get
\begin{align}
	\norm{\P_{r} e^{\L_I t} \oket{O}}
	\leq  \Exp{\frac{\nu \lambda  \frac{\ell_{k+1}^{2d+1}}{\ell_k^\alpha}t-r}{\ell_{k+1}}}
	=   \Exp{\frac{\Delta v\ t-r}{\ell_{k+1}}},\label{eq:bound-int}
\end{align}
where
\begin{align}
	\Delta v \equiv \nu \lambda \frac{\ell_{k+1}^{2d+1}}{\ell_k^\alpha}.
\end{align}
Note that we assume $R = (1-\epsilon)\ell_{k+1} \geq (1+\epsilon)\ell_k$.
A constant $\epsilon$ satisfying this condition exists as long as $\ell_{k+1}>\ell_k$.

Next, we use the following lemma to ``merge'' this bound for $e^{\L_I t}$ with the bound in \cref{eq:bound-for-k} for $e^{\L_k t}$.
\begin{lemma}\label{lem:combine-two-bounds}
	Let $H_1,H_2$ be two possibly time-dependent Hamiltonians and $\L_1,\L_2$ be the corresponding Liouvillians. Suppose that for all unit-norm, single-site operators $O$ and for all times $t \leq \Delta t$ for some $\Delta t$,
	\begin{align}
		 &\norm{\mathbb P_r  e^{\L_1 t} \oket{O}} \leq c_1 r^{\xi_1}e^{\frac{v_1t -r}{\ell_1}},\label{eq:subbound1}\\
		 &\norm{\mathbb P_r  e^{\L_2 t} \oket{O}} \leq c_2 r^{\xi_2}e^{\frac{v_2t -r}{\ell_2}},\label{eq:subbound2}
	\end{align}
	for some $\ell_2 \geq \ell_1$ and $c_1,c_2\geq 1; \xi_1,\xi_2 \geq 0$ are constants.
	We have
	\begin{align}
		 &\norm{\mathbb P_r  e^{\L_2 t}e^{\L_1 t} \oket{O}} \leq 2^{d+5}c_1c_2 r^{\xi_1 + \xi_2+d+1} e^{\frac{(v_1+v_2)t -r}{\ell_2}},
	\end{align}
	for all $t\leq \Delta t$.
\end{lemma}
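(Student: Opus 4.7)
The plan is a shell decomposition: I would split $e^{\L_1 t}\oket{O}$ into contributions supported on successively larger balls around the origin, estimate each piece via hypothesis \cref{eq:subbound1}, and then evolve each piece under $\L_2$ using \cref{eq:subbound2}. Concretely, pick a length scale $R$ of order $\ell_2$ (and at least of order $v_1 t$, so the per-shell estimates are nondegenerate for $t\leq \Delta t$) and set
\begin{align}
O_0(t)\equiv(\bbI-\P_R)e^{\L_1 t}O,\qquad O_q(t)\equiv(\P_{qR}-\P_{(q+1)R})e^{\L_1 t}O \quad (q\geq 1),\nonumber
\end{align}
so that $\sum_{q\geq 0}O_q(t)=e^{\L_1 t}O$, each $O_q(t)$ is supported inside the ball $\B_{(q+1)R}$ centered at the origin, and \cref{eq:subbound1} yields $\norm{O_q(t)}\leq 2c_1(qR)^{\xi_1}e^{(v_1 t-qR)/\ell_1}$ for $q\geq 1$ (with $\norm{O_0(t)}\leq 2$). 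The triangle inequality then gives $\norm{\P_r e^{\L_2 t}e^{\L_1 t}\oket{O}}\leq \sum_{q\geq 0}\norm{\P_r e^{\L_2 t}\oket{O_q(t)}}$.

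The technical core is to bound each summand using the single-site hypothesis \cref{eq:subbound2}, even though $O_q(t)$ is supported on a whole ball of radius $(q+1)R$. To do this I would pass to the commutator form: $\norm{\P_r e^{\L_2 t}\oket{O_q(t)}}$ is bounded, up to a factor of two, by $\sup_A\norm{[A,e^{\L_2 t}O_q(t)]}$ over unit-norm $A$ supported at distance $\geq r$ from the origin, and by unitarity this equals $\sup_A\norm{[e^{-\L_2 t}A,O_q(t)]}$. Because $O_q$ is supported on $S\equiv\B_{(q+1)R}$, only the part of $e^{-\L_2 t}A$ acting nontrivially on $S$ contributes, and that part is in turn controlled by the single-site commutators $\norm{[e^{-\L_2 t}A,u_j]}=\norm{[A,e^{\L_2 t}u_j]}$ for $j\in S$, each of which is bounded by \cref{eq:subbound2} applied at site $j$ and distance $r-(q+1)R$ (any site at distance $\geq r$ from the origin is at distance $\geq r-(q+1)R$ from $j$). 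Summing over the $|S|\lesssim((q+1)R)^d$ sites of the ball leads to the per-shell bound
\begin{align}
\norm{\P_r e^{\L_2 t}\oket{O_q(t)}}\lesssim ((q+1)R)^d\,\norm{O_q(t)}\,c_2\,(r-(q+1)R)^{\xi_2}\,e^{(v_2 t-(r-(q+1)R))/\ell_2}.\nonumber
\end{align}

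Finally, I would substitute the shell norm bound and carry out the sum over $q$. Using $\ell_1\leq\ell_2$ to replace $e^{-qR/\ell_1}$ by $e^{-qR/\ell_2}$, the $q$-dependent exponentials collapse up to an $O(1)$ shift $e^{R/\ell_2}$; merging $v_1 t/\ell_1$ with $v_2 t/\ell_2$ into $(v_1+v_2)t/\ell_2$ costs another $O(1)$ factor (using $t\leq\Delta t$). The sum runs effectively from $q=0$ to $q\sim r/R$, contributing the extra linear factor of $r$; the ball volume $((q+1)R)^d\lesssim r^d$ contributes $r^d$; and the polynomial prefactors $(qR)^{\xi_1}$ and $(r-(q+1)R)^{\xi_2}$ combine into $r^{\xi_1+\xi_2}$. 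Packaging everything yields the claimed $2^{d+5}c_1c_2\,r^{\xi_1+\xi_2+d+1}\,e^{((v_1+v_2)t-r)/\ell_2}$. The main obstacle is the multi-site-to-single-site reduction in the second step: a naive Pauli expansion of $O_q(t)$ would cost an exponential $4^{|S|}$ and be useless, so one must exploit the commutator structure so that the cost is only linear in $|S|$, which is what lets the volume factor enter as $r^d$ rather than as $e^{r^d}$.
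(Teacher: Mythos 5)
Your overall strategy coincides with the paper's: decompose $e^{\L_1 t}\oket{O}$ into concentric shells, control each piece's norm with \cref{eq:subbound1}, evolve each piece under $\L_2$ via a multi-site extension of \cref{eq:subbound2} whose cost is only the volume $|S|$ of the support (the paper imports this step as Lemma 4 of Ref.~\cite{tranHierarchyLinearLight2020a} rather than reproving it, but your Haar-averaging/commutator sketch is the right mechanism and you correctly identify it as the crux), and then sum over the $O(r)$ shells to collect the factor $r^{\xi_1+\xi_2+d+1}$.

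However, one step as written would fail: the claim that merging $e^{v_1 t/\ell_1}$ into $e^{v_1 t/\ell_2}$ ``costs another $O(1)$ factor (using $t\leq\Delta t$)''. The discrepancy is $e^{v_1 t(1/\ell_1-1/\ell_2)}$, and the lemma assumes no relation between $v_1\Delta t$ and $\ell_1$; in the paper's application $\Delta t=(1-\epsilon)\ell_2/v_1$, so this factor is of order $e^{\ell_2/\ell_1}=e^{L}$, which grows with $r_*$ and would break the recursion in \cref{lem:recusive}. The paper sidesteps this by anchoring the decomposition at radius $v_1 t$: the innermost region is the ball of radius $v_1 t$ (bounded trivially in norm), and the shells of width $\ell_1$ start there, so the $e^{v_1 t/\ell_1}$ coming from \cref{eq:subbound1} is exactly cancelled by the shell radius $v_1t+q\ell_1$, and $v_1 t$ re-enters the final bound only through the reduced distance $r-v_1 t$ fed into \cref{eq:subbound2} --- i.e., divided by $\ell_2$, which is precisely what produces $(v_1+v_2)t/\ell_2$. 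Your fallback of choosing $R\gtrsim v_1 t$ does salvage the argument, but the first shell then overshoots $v_1t$ by up to $v_1t$, costing $e^{2R/\ell_2}\geq e^{2v_1 t/\ell_2}$ and yielding velocity $2v_1+v_2$ rather than $v_1+v_2$ --- adequate for the paper's purposes but not the stated conclusion. A smaller bookkeeping point: for shells with $(q+1)R>r$ your per-shell formula degenerates (negative distance); these must be handled separately, as the paper does with its $\oket{O_*}$ term, using the trivial estimate $\norm{\P_r e^{\L_2 t}\oket{O_q}}\leq 2\norm{O_q}$.
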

We prove \cref{lem:combine-two-bounds} in \cref{sec:combine-two-bounds-proof}.
Using the lemma, we obtain a bound for the evolution under $H_{k+1}$:
\begin{align}
	\norm{\P_{r} e^{\L_{k+1} t} \oket{O}}
	= \norm{\P_{r} e^{\L_{I} t} e^{\L_{k} t} \oket{O}}
	 \leq 2^{d+5} r^{d+1} e^{\frac{(v_k+\Delta v)t-r}{\ell_{k+1}}}. \label{eq:bound-small-time}
\end{align}
However, because we assume $v_k t\leq R$ in deriving \cref{eq:bound-int}, \cref{eq:bound-small-time} is only valid for small time $t \leq (1-\epsilon)\ell_{k+1}/v_k \equiv \Delta t$.
To extend the bound to all time, we use a corollary of \cref{lem:combine-two-bounds}:
\begin{corollary}\label{coro:multitime}
	Suppose we have a single-site, unit-norm operator $O$, a Hamiltonian $H$ with a corresponding Liouvillian $\L$, a constant $\Delta t$,
	and
	\begin{align}
		 \norm{\P_{r} e^{\L t} \oket{O}} \leq c_0 r^{\xi_0} e^{(vt - r )/ \ell}
	\end{align}
	holds for all $t\leq \Delta t$.
	Then, for all $t \leq 2^k \Delta t$ for any $k \in \mathbb N$, we have
	\begin{align}
		 \norm{\P_{r} e^{\L t} \oket{O}} \leq c_k  r^{\xi_k} e^{(vt - r) / \ell},\label{eq:kDeltat-bound}
	\end{align}
	where
		 $c_k = 2^{(d+5)(2^{k}-1)}c_0^{2^k},$
		 $\xi_k =  (2^k-1)(d+1) + 2^k \xi_0$
	are constants.
	 In particular,
	 \begin{align}
		 \norm{\P_{r} e^{\L t} \oket{O}} \leq   e^{\frac{\chi t}{\Delta t} + \frac{vt-r}{\ell}}\label{eq:all-time-bound},
	\end{align}
	where $\chi = 2 [\log(2^{d+5}c_0) + (d+1+\xi_0)\log r]$,
	holds for all time $t$.
\end{corollary}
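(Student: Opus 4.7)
The plan is to obtain the time-extended bound by iterating \cref{lem:combine-two-bounds} with $H_1=H_2=H$: each application doubles the time window on which the bound is valid, at the cost of squaring the prefactor and adding $(d+1)$ to the polynomial exponent in $r$.

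I would establish \cref{eq:kDeltat-bound} by induction on $k$. The base case $k=0$ is the hypothesis of the corollary. For the inductive step, suppose the bound holds at level $k$ on $[0,2^k\Delta t]$; then for any $s\leq 2^{k+1}\Delta t$, decompose $e^{\L s}=e^{\L\,s/2}\,e^{\L\,s/2}$ and apply \cref{lem:combine-two-bounds} with both input bounds taken to be the step-$k$ estimate at the common time $s/2\leq 2^k\Delta t$. The lemma immediately produces
\begin{align*}
	\norm{\P_r e^{\L s}\oket{O}}\leq 2^{d+5}\,c_k^{2}\,r^{2\xi_k+d+1}\,e^{(vs-r)/\ell},
\end{align*}
which is \cref{eq:kDeltat-bound} at level $k+1$ provided $c_{k+1}=2^{d+5}c_k^2$ and $\xi_{k+1}=2\xi_k+(d+1)$. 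Unrolling these recurrences yields the advertised closed forms $c_k=2^{(d+5)(2^k-1)}c_0^{2^k}$ and $\xi_k=(2^k-1)(d+1)+2^k\xi_0$.

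For \cref{eq:all-time-bound} I would absorb the step-$k$ polynomial prefactor into an exponential that is linear in $t$. For $t>\Delta t$, choose $k=\lceil\log_2(t/\Delta t)\rceil$, so that $t\leq 2^k\Delta t$ and $2^k\leq 2t/\Delta t$. Taking logarithms of the closed forms gives
\begin{align*}
	\log\bigl(c_k r^{\xi_k}\bigr)\leq 2^k\bigl[\log(2^{d+5}c_0)+(d+1+\xi_0)\log r\bigr]\leq\frac{2t}{\Delta t}\bigl[\log(2^{d+5}c_0)+(d+1+\xi_0)\log r\bigr]=\frac{\chi t}{\Delta t},
\end{align*}
with $\chi$ as defined in the statement; combining this with the exponential factor $e^{(vt-r)/\ell}$ yields \cref{eq:all-time-bound}. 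For the remaining range $t\leq\Delta t$ one appeals directly to the hypothesis, noting that the penalty $e^{\chi t/\Delta t}$ absorbs the constant prefactor $c_0 r^{\xi_0}$ once $c_0\geq 1$.

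The only mildly subtle point is the matching of scales: although $c_k r^{\xi_k}$ blows up doubly-exponentially in $k$, the allowed time window $2^k\Delta t$ grows at exactly the same rate, so $\log(c_k r^{\xi_k})$ is linear in $2^k\sim t/\Delta t$ and hence only linear in the physical time $t$. This is precisely the cancellation recorded by \cref{eq:all-time-bound}. Beyond this observation, the argument is bookkeeping on the recurrences supplied by \cref{lem:combine-two-bounds}.
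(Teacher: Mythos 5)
Your proposal is correct and follows essentially the same route as the paper: induction on $k$ via the splitting $e^{\L t}=e^{\L t/2}e^{\L t/2}$ and \cref{lem:combine-two-bounds}, the same recurrences $c_{k+1}=2^{d+5}c_k^2$, $\xi_{k+1}=2\xi_k+d+1$, and the same choice $k=\lceil\log_2(t/\Delta t)\rceil$ with $2^k\leq 2t/\Delta t$ to absorb the doubly-exponential prefactor into $e^{\chi t/\Delta t}$. The only differences are cosmetic bookkeeping (you keep the lemma's time variable at $s/2$ rather than rescaling the velocities to $v/2$ as the paper does, and you treat $t\leq\Delta t$ explicitly), neither of which changes the argument.
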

We prove the corollary in \cref{sec:all-time-proof}.
Using the corollary, we can extend \cref{eq:bound-small-time} to a bound for all time:
\begin{align}
	  \norm{\P_{r} e^{\L_{k+1} t} \oket{O}}
	  \leq \Exp{\frac{\chi t}{\Delta t}+\frac{(v_k+\Delta v)t-r}{\ell_{k+1}}}
	  \leq \Exp{\frac{\chi_* v_k t}{(1-\epsilon)\ell_{k+1}}+\frac{(v_k+\Delta v)t-r}{\ell_{k+1}}}
	  =\Exp{\frac{v_{k+1}t-r}{\ell_{k+1}}},
\end{align}
where we have upper bounded $\chi$ by $\chi_* = 4(d+5)\log 2+4(d+1)\log r_* $, $r_*\geq r$ is the diameter of the lattice, and
\begin{align}
	v_{k+1} &= \left(\frac{\chi_*}{1-\epsilon}+1\right) v_k + \nu \lambda \frac{\ell_{k+1}^{2d+1}}{\ell_k^\alpha}
	\leq 4(4d+13)\log(r_*) v_k + \nu \lambda \frac{\ell_{k+1}^{2d+1}}{\ell_k^\alpha}.
\end{align}
Here, we have assumed that $r_* \geq 2$ and $\epsilon\leq 1/2$ so that $1/(1-\epsilon)\leq 2$, $\chi_* \leq 4(2d+6)\log r_*$, and $1 \leq 4\log r_*$.
Therefore, \cref{lem:recusive} holds with $\xi = 4(4d+13)$.
\end{proof}
\subsubsection{Proof of \cref{lem:combine-two-bounds}}\label{sec:combine-two-bounds-proof}
In this section, we prove \cref{lem:combine-two-bounds}.
\begin{proof}
	The bound is trivial for $r < vt$, where $v = v_1+v_2$.
	Therefore, we will consider $r \geq vt$ in the rest of the proof.

	The strategy is to apply \cref{eq:subbound1,eq:subbound2} consecutively.
	A technical difficulty comes from the fact that after the first evolution $e^{\L_1 t}$, the operator has spread to more than one site.
	Therefore, we cannot directly apply \cref{eq:subbound2}, which assumes that the operator is single-site.
	Instead, we need to use \cite[Lemma 4]{tranHierarchyLinearLight2020a} to extend the bound for sing-site operators to multi-site operators.
	In particular, given the assumed bound \cref{eq:subbound2} and an unit-norm operator $O_X$ supported on a ball $X$ of radius $x\leq r$, we have
	\begin{align}
		\norm{\P_{r} e^{\L_2 t} \oket{O_X}} \leq \frac{9}{2} \abs{X} c_2 r^{\xi_2} e^{(v_2t-r+x)/\ell_2}.
	\end{align}

	With that in mind, we divide the lattice into:
	\begin{enumerate}
	 	\item A ball of radius $v_1 t$ around the origin,
	 	\item Shells of inner radius $v_1t + (q-1)\ell_1$ and outer radius $v_1t+q\ell_1$ for $q = 1,\dots, \frac{r-v_1t}{\ell_1} $,
	 	\item The rest of the lattice, i.e. sites at least a distance $r$ from the origin.
	 \end{enumerate}
	 We then project $e^{\L_1 t} \oket O$ into these regions:
	 \begin{align}
	 	e^{\L_1 t} \oket O &= \bigg[(\mathbb I - \P_{v_1t}) + \sum_{q=1}^{(r-v_1t)/{\ell_1}} (\P_{v_1t + (q-1){\ell_1}} - \P_{v_1t + q{\ell_1}}) + \P_{r}\bigg] e^{\L_1 t} \oket O\\
	 	& \equiv \oket{O_0} + \sum_{q = 1}^{q_*}  \oket{O_q} + \oket{O_{*}},
	 \end{align}
		where $q_* = (r-v_1t)/{\ell_1}$.
	 We then apply the other evolution, i.e. $e^{\L_2 t}$, on each term of the above expansion.

	First, we consider $\oket{O_0}$, which has norm at most three and is supported on at most $(2v_1t)^d = (2v_1t)^d \leq (2r)^d$ sites that are at least a distance $r-v_1t$ from the outside.
	Using the assumed bound, we have
	\begin{align}
		\norm{\P_{r} e^{\L_2 t}\oket{O_0}}
		&\leq \frac{27}{2} (2r)^d c_2  r^{\xi_2} e^{(v_2t - r+v_1t)/{\ell_2}}
		= \frac{27}{2} 2^d c_2 r^{\xi_2+d} e^{(vt-r)/{\ell_2}}.\label{eq:mid1}
	\end{align}

	Next, we consider $\oket{O_*}$.
	Because $\norm{O_*} \leq c_1 r^{\xi_1} e^{(v_1t-r)/{\ell_1}}$,
	\begin{align}
		 \norm{\P_{r} e^{\L_2 t}\oket{O_*}}
		 \leq 2\norm {O_*}
		 \leq 2c_1 r^{\xi_1} e^{(v_1t-r)/{\ell_1}}
		 \leq 2c_1c_2 r^{\xi_1+\xi_2} e^{(vt-r)/{\ell_1}}
		 \leq 2c_1c_2 r^{\xi_1+\xi_2} e^{(vt-r)/{\ell_2}}. \label{eq:mid2}
	\end{align}

	Finally, we consider $\oket{O_q}$.
	Note that $O_q$ is supported on a ball of volume at most $2^d(v_1t+q{\ell_1})^{d}\leq (2r)^d$, $\norm{O_q}\leq (1+e)c_1
	r^{\xi_1} e^{-q}$ and the distance between $O_q$ and $\P_{r}$ is $r - v_1t - q{\ell_1}\leq r$.
	Therefore, we have
	\begin{align}
		\sum_{q} \norm{\P_{r} e^{\L_2 t}\oket{O_q}} &\leq\sum_{q} \frac92(2r)^d\norm{O_q} c_2 r^{\xi_2} e^{(v_2t - (r-v_1t - q{\ell_1}))/{\ell_2}}\\
		&\leq\sum_{q} \frac92(2r)^d (1+e)c_1 r^{\xi_1} e^{-q} c_2 r^{\xi_2} e^{(vt - r)/{\ell_2}} e^{q\frac{\ell_1}{\ell_2}}\\
		&\leq \sum_{q} 17\times 2^d c_1c_2 r^{\xi_1+\xi_2+d} e^{(vt - r)/{\ell_2}}\\
		&\leq 17\times 2^d c_1c_2 r^{\xi_1+\xi_2+d+1} e^{(vt - r)/{\ell_2}}, \label{eq:mid3}
	\end{align}
	where we have used $\ell_1 \leq \ell_2$ and the fact that there are at most $\frac{r-v_1t}{\ell_1}\leq r$ different $q$.

	Combining \cref{eq:mid1,eq:mid2,eq:mid3} with $c_1,c_2\geq 1$, $d\geq 1$ and $\xi_1,\xi_2\geq 0$, we have
	\begin{align}
		\norm{\P_r e^{\L_2 t}e^{\L_1 t}\oket{O}} \leq  2^{d+5} c_1c_2 r^{\xi_1+\xi_2+d+1} e^{(vt-r)/\ell_2},
	\end{align}
	with $v = v_1+v_2$.
	Therefore, the lemma follows.
\end{proof}
\subsubsection{Proof of \cref{coro:multitime}}\label{sec:all-time-proof}
In this section, we prove \cref{coro:multitime}, an application of \cref{lem:combine-two-bounds} which extends the validity of a bound from $t\leq \Delta t$ to arbitrary time.

\begin{proof}
	The lemma clearly holds for $k = 0$. So we will prove it by induction.
	Suppose \cref{eq:kDeltat-bound} holds for some $k\in \mathbb N$.
	We will prove that it holds for $k+1$.

	The strategy is to apply the assumed bound for $k$ [\cref{eq:kDeltat-bound}] twice:
	\begin{align}
		\norm{\P_{r} e^{\L t} \oket{O}} =  \norm{\P_{r} e^{\L t/2}e^{\L t/2} \oket{O}},
	\end{align}
	where the evolutions under $e^{\L t/2}$ can be bounded by the assumed bound because $t/2 \leq 2^{k}\Delta t$.
	We then use \cref{lem:combine-two-bounds} to merge the two identical bounds with $v_1 = v_2 \rightarrow v/2$, $\ell_1 = \ell_2 \rightarrow \ell$, $c_1 = c_2 \rightarrow c_k$, $\xi_1 = \xi_2 \rightarrow \xi_k$:
	\begin{align}
		\norm{\P_r e^{\L t}\oket{O}} \leq  2^{d+5} c_k^2 r^{2\xi_k+d+1} e^{(vt-r)/\ell}.
	\end{align}
	We choose
	\begin{align}
		&c_{k+1} = 2^{d+5}c_k^2 &&\Rightarrow c_k = 2^{(d+5)(2^{k}-1)}c_0^{2^k} \\
		&\xi_{k+1} = 2\xi_k + d+1 &&\Rightarrow \xi_k = (2^k-1)(d+1) + 2^k \xi_0.
	\end{align}
	Therefore, by induction, \cref{eq:kDeltat-bound} holds for $k+1$.

	Next, to prove \cref{eq:all-time-bound}, we choose $k = \ceil{\log_2(t/\Delta t)}$ so that $t\leq 2^k \Delta t$.
	We also have $2^k \leq \frac{2t}{\Delta t}$.
	Therefore, $c_k \leq (2^{d+5}c_0)^{\frac{2t}{\Delta t}}$, $\xi_k \leq (d+1+\xi_0)\frac{2t}{\Delta t}$.
	Plugging them into \cref{eq:kDeltat-bound}, we have
	\begin{align}
		\norm{\P_{r} e^{\P_{r} t}\oket{O}} \leq  (2^{d+5}c_0)^{\frac{2t}{\Delta t}} r^{(d+1+\xi_0)\frac{2t}{\Delta t}} e^{\frac{vt-r}{\ell}} = e^{\chi \frac{t}{\Delta t} + \frac{vt - r}{\ell}},
	\end{align}
	with $\chi = 2 [\log(2^{d+5}c_0) + (d+1+\xi_0)\log r]$.
\end{proof}

\subsection{Proof of Lemma~\ref{lem:bound-finite-lattice}}\label{sec:add-long-range-proof}
In this section, we prove \cref{lem:bound-finite-lattice}.
\begin{proof}
First, we need the following lemma, which uses an existing bound to prove a tighter bound.
We will use the lemma recursively to prove the nearly optimal bound in \cref{lem:bound-finite-lattice}.
\begin{lemma}\label{lem:add-long-range-recursive}
Let $\eta \in (0,\frac{1}{\alpha-d})$ be an arbitrary constant and
\begin{align}
	\delta = \frac{2d+1}{(2d+1-\alpha)(1+\eta(2d+1-\alpha))}
\end{align}
be another constant.
Suppose there exist constants $\gamma, C, c\geq 0$, $\kappa\geq\delta$, and $\beta > d$ such that
\begin{align}
	\norm{\P_r e^{\L t}\oket{O}} \leq C \log^\kappa r_* \frac{t^{\gamma}}{r^{\beta}},\label{eq:recursive-assumption}
\end{align}
holds for all $t^{\gamma} \leq c r^{\beta}/\log^\delta r_*$.
Then, there exist constants $C', c'>0,$ and  $\kappa'>\delta$ such that
\begin{align}
	\norm{\P_r e^{\L t}\oket{O}} \leq C' \log^{\kappa'} r_* \frac{t^{\gamma'}}{r^{\beta'}},
\end{align}
holds for all $t^{\gamma'}\leq c' r^{\beta'}/\log^\delta r_*$,
where
\begin{align}
	&\kappa' = \max\left\{\kappa-\frac{\delta(\beta-d)}{\beta}+\frac{\alpha-d}{2d+1-\alpha},\delta\right\},\\
	&\gamma' = \gamma d/\beta+1-\eta(\alpha-d),\\
	&\beta'={\alpha-d-\eta(\alpha-2d)(\alpha-d)}>d.
\end{align}
\end{lemma}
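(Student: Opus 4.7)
My plan is to isolate the short-range part of $H$ at some intermediate scale $\ell$, apply a single Duhamel expansion, and use the assumed input bound \cref{eq:recursive-assumption} to control the evolution that follows each long-range ``hop.'' The scale $\ell$ (and an associated number of recursion steps $n$ with $\ell_n=\ell$) will be tuned so that the short-range contribution is subleading and the long-range correction reproduces the announced exponents.

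First I would apply the Duhamel identity
\begin{align}
e^{\L t}\oket O = e^{\L_{\leq \ell} t}\oket O + i\int_0^t ds\ e^{\L(t-s)}\comm{H_{>\ell},\ e^{\L_{\leq \ell} s} O},
\end{align}
where $H_{\leq \ell}$ and $H_{>\ell}$ collect terms with $\dist(i,j)\leq \ell$ and $\dist(i,j)>\ell$ respectively. For the first term I iterate \cref{lem:recusive} over a chain $1<\ell_1<\dots<\ell_n=\ell$ exactly as in \cref{eq:vn-def}, producing $\norm{\P_r e^{\L_{\leq \ell} t}\oket O}\leq \exp[(v_n t-r)/\ell]$ with the same logarithmic structure for $v_n$. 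Choosing $\ell$ and $n$ so that $v_n t\ll r$ in the regime of interest makes this term much smaller than the target $C'\log^{\kappa'}r_*\cdot t^{\gamma'}/r^{\beta'}$.

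For the Duhamel correction I decompose $H_{>\ell}=\sum_{\dist(i,j)>\ell}h_{ij}$ and use that $\comm{h_{ij}, e^{\L_{\leq \ell} s} O}$ is supported on $\{i,j\}$ with norm at most $\tfrac{2}{\dist(i,j)^\alpha}\norm{e^{\L_{\leq \ell} s} O|_{\{i,j\}}}$, where the restriction is controlled by the exponential short-range tail: effectively $i$ or $j$ must lie in a ball of radius $\sim v_n s+\ell$ around the origin. I then propagate the localized commutator by $e^{\L(t-s)}$ and use \cref{eq:recursive-assumption} to project onto $\P_r$: when $\dist(0,j)<r$ the input bound gives a contribution $\lesssim \log^\kappa r_*\,(t-s)^\gamma/(r-\dist(0,j))^\beta$, while for $\dist(0,j)\geq r$ the commutator already lies inside $\P_r$ and the trivial bound $\norm{h_{ij}}\leq 1/\dist(i,j)^\alpha$ applies. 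Summing over $(i,j)$ and integrating $s\in[0,t]$ assembles the new polynomial bound.

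The main obstacle is choosing $\ell$ and $n$ so that the exponents match $\gamma'=\gamma d/\beta+1-\eta(\alpha-d)$ and $\beta'=(\alpha-d)(1-\eta(\alpha-2d))$ precisely. The ``$+1$'' in $\gamma'$ is the single Duhamel time integral; the ``$\gamma d/\beta$'' factor arises from the $d$-dimensional volume of the effective short-range support (a ball of radius $\sim t^{\gamma/\beta}$ obtained by inverting the input bound); and the $\eta$-dependent corrections come from a power-law choice $\ell\sim r^{1-\eta(\alpha-2d)}$ balancing the exponential short-range tail against the algebraic $1/r^{\alpha-d}$ long-range tail, which is exactly where the hypothesis $\eta<1/(\alpha-d)$ is needed to keep $\beta'>d$. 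Carrying the $\log^{n-1}r_*$ factors of $v_n$ through this optimization yields the specific $\kappa'$, and I would need to verify that the hypothesis $\kappa'\geq\delta$ is maintained so the lemma can be iterated further in the proof of \cref{lem:bound-finite-lattice}.
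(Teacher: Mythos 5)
Your skeleton matches the paper's: Duhamel at the scale $\ell=\ell_n$ chosen as in \cref{eq:n-value}, the iterated \cref{lem:recusive} bound \cref{eq:bound-n} for the short-range piece, and a single time integral supplying the ``$+1$'' in $\gamma'$. But the way you treat the Duhamel correction is backwards, and in the form you describe it fails to give the claimed $\beta'$. The paper's key move is to apply the \emph{input} bound \cref{eq:recursive-assumption} to the \emph{pre-hop} short-range evolution: inserting $\P_{\dist(i,0)}$, the weight of $e^{\L_n s}\oket O$ available at the attachment site $i$ is bounded trivially for $\dist(i,0)\lesssim (s^\gamma\log^\delta r_*)^{1/\beta}$ and by $C\log^\kappa r_*\, s^\gamma/\dist(i,0)^\beta$ beyond, so the sum over $i$ is only $\sim s^{\gamma d/\beta}$ (up to logs), with no $r$-dependence; the post-hop evolution $e^{\L(t-s)}$ is then discarded by unitarity, and the entire $r$-decay comes from $\sum_{j:\dist(i,j)>\ell_n}\norm{h_{ij}}\lesssim \ell_n^{-(\alpha-d)}$ with $\ell_n\sim r^{1-\eta(\alpha-2d)}t^\eta$. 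You instead control the pre-hop localization with the exponential tail of \cref{eq:bound-n} and try to recover decay by applying the input bound to the post-hop propagation. The pre-hop step is where this breaks: summing the exponential profile over attachment sites $i$ gives an effective volume of order $(v_n s+\ell_n)^d$, which is polynomially large in $r$ (at least $\ell_n^d\sim r^{d(1-\eta(\alpha-2d))}t^{\eta d}$, and up to $\sim r^d$ since $v_nt$ can be of order $r$), whereas the paper's count is $s^{\gamma d/\beta}$ times polylogs. The post-hop input bound cannot rescue this, because for hops landing at $\dist(j,0)\gtrsim r$ (or near the boundary) there is no post-hop decay to gain: those terms contribute of order $\int_0^t ds\,(v_n s+\ell_n)^d\, r^{-(\alpha-d)}\gtrsim t/r^{\alpha-2d}$, i.e.\ a tail exponent stuck near $\alpha-2d$ (essentially bound B3), which is larger than the target $C'\log^{\kappa'}r_*\,t^{\gamma'}/r^{\beta'}$ with $\beta'=(\alpha-d)(1-\eta(\alpha-2d))$ already at $t=O(1)$.

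Your own exponent accounting betrays the inconsistency: you correctly say the factor $\gamma d/\beta$ should come from a ball of radius $\sim s^{\gamma/\beta}$ ``obtained by inverting the input bound,'' but in your scheme the input bound never touches the short-range evolution, whose support you bound by $v_ns+\ell$; the radius $s^{\gamma/\beta}$ only appears if you use \cref{eq:recursive-assumption} on $\norm{\P_{\dist(i,0)}e^{\L_n s}\oket O}$ as the paper does (noting, as one should, that the input bound applies to $H_n$ because it is itself a power-law Hamiltonian). Two smaller points: the commutator $[h_{ij},e^{\L_n s}O]$ is not supported on $\{i,j\}$ alone but on the union with the (quasi-local) support of $e^{\L_n s}O$, so applying the input bound after the hop would anyway require the multi-site extension and a distance measured from that whole support, not $r-\dist(0,j)$; and the bookkeeping of the $\log r_*$ factors (the $-\delta(\beta-d)/\beta$ gain from the far-region sum and the $(\alpha-d)/(2d+1-\alpha)$ loss from $\ell_n^{-(\alpha-d)}$, plus the max with $\delta$) is exactly where $\kappa'$ comes from, so it must be done with the paper's splitting of the $i$-sum rather than deferred.
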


\begin{proof}
Let $V = H - H_n$ to be the sum over interactions of range more than $\ell_n$.
We have~\cite{chenOperatorGrowthBounds2019}
\begin{align}
	e^{\L t} \oket{O}
	= e^{\L_n t} \oket{O}
	+\sum_{h_{ij}}\int_0^t ds\ e^{\L (t-s)} \L_{h_{ij}} e^{\L_n s}\oket{O},
\end{align}
where the sum is over all $h_{ij}$ in $V$.
The first term is the evolution under $H_n$, which we can bound using \cref{eq:bound-n}.
Our task is to bound the second term.

Without loss of generality, we assume $i\leq j$.
Because $\L_{h_{ij}}\oket{O}$ only acts nontrivially on the part of $O$ supported at least a distance $\dist(i,0)$ from the origin, we can insert $\P_{\dist(i,0)}$ in the middle of the intergrand and use the triangle inequality:
\begin{align}
	\norm{\P_{r} e^{\L t} \oket{O}}
	&\leq \norm{\P_{r} e^{\L_n t} \oket{O}}
	+\norm{\P_r\sum_{h_{ij}}\int_0^t ds e^{\L (t-s)} \L_{h_{ij}}\P_{\dist(i,0)} e^{\L_n s}\oket{O}}\\
	&\leq \norm{\P_{r} e^{\L_n t} \oket{O}}
	+4\sum_{h_{ij}}\int_0^t ds \norm{h_{ij}}\norm{\P_{\dist(i,0)} e^{\L_n s}\oket{O}}.
\end{align}

Because $\norm{h_{ij}}\leq 1/\dist(i,j)^\alpha$ and $\dist(i,j)>\ell_n$, there exist a constant $K_1$ such that
	 $\sum_{j:\dist(i,j)>\ell_n}\norm{h_{ij}} \leq {K_1}/{\ell_n^{\alpha-d}}$
for all $i\in \Lambda$.
Therefore, we have
\begin{align}
	\norm{\P_{r} e^{\L t} \oket{O}}
	&\leq \norm{\P_{r} e^{\L_n t} \oket{O}}
	+\frac{4K_1}{\ell_n^{\alpha-d}}\int_0^t ds\sum_{i}\norm{\P_{\dist(i,0)} e^{\L_n s}\oket{O}},   \label{eq:short-and-long}
\end{align}

We then consider two cases for the sum over $i$.
If $\dist(i,0)^\beta \leq \frac{1}{c}\log ^\delta (r_*)\ s^{\gamma}$, we use a trivial bound on the projection:
\begin{align}
	\sum_{i:\dist(i,0)^\beta\leq \frac{1}{c}\log ^\delta (r_*)\ s^{\gamma}}\norm{\P_{\dist(i,0)} e^{\L_n s}\oket{O}}
	\leq (2 c^{-1/\beta} (\log r_*)^{\delta/\beta}s^{\gamma/\beta})^d \times 2 = 2^{d+1}c^{-d/\beta}(\log r_*)^{\delta d/\beta}s^{\gamma d/\beta}.\label{eq:addnpart1}
\end{align}
Otherwise, if $\dist(i,0)^\beta> \frac{1}{c}\log ^\delta (r_*)\ s^{\gamma}$, we apply \cref{eq:recursive-assumption}:
\begin{align}
	\sum_{i:\dist(i,0)^\beta>\frac{1}{c}\log ^\delta(r_*)\ s^{\gamma}}\norm{\P_{\dist(i,0)} e^{\L_n s}\oket{O}}
	&\leq C\log^\kappa r_* \sum_{i: \dist(i,0)^\beta> \frac{1}{c}\log ^\delta(r_*)\ s^{\gamma}} \frac{s^{\gamma}}{\dist(i,0)^{\beta}}\\
	&\leq CK_2\log^\kappa r_*  \frac{s^{\gamma}}{[c^{-1/\beta}(\log r_*)^{\delta/\beta}s^{\gamma/\beta}]^{\beta-d}}\\
	&\leq CK_2c^{\frac{\beta-d}{\beta}}(\log r_*)^{\kappa-\frac{\delta(\beta-d)}{\beta}}\  s^{\gamma d/\beta},\label{eq:addnpart2}
\end{align}
where $K_2$ is a constant such that
\begin{align}
	\sum_{i: \dist(i,0)>  a }\frac{1}{\dist(i,0)^{\beta}} \leq  \frac{K_2}{a^{\beta-d}},
\end{align}
for all $a>0$. Such a constant $K_2$ exists because $\beta>d$ by assumption.

Combining \cref{eq:addnpart1,eq:addnpart2} and accounting for $\kappa\geq \delta$, we can upper bound
\begin{align}
	\frac{4K_1}{\ell_n^{\alpha-d}}\int_0^t ds\sum_{i}\norm{\P_{\dist(i,0)} e^{\L_n s}\oket{O}}
	&\leq K (\log r_*)^{\kappa-\frac{\delta(\beta-d)}{\beta}}\frac{t^{\frac{\gamma d}{\beta}+1}}{\ell_n^{\alpha-d}},\label{eq:long-range-part}
\end{align}
where we absorb all constants into $K = 4K_1(2^{d+1} c^{-d/\beta}+CK_2/c^{\frac{\beta-d}{\beta}})\frac{\beta}{\gamma d+\beta}$.
Substituting \cref{eq:long-range-part} in \cref{eq:short-and-long}, we have a bound for the evolution under $H$:
\begin{align}
	\norm{\P_{r} e^{\L t} \oket{O}}
	\leq e^{\frac{v_n t - r}{\ell_n}} +  K (\log r_*)^{\kappa-\frac{\delta(\beta-d)}{\beta}}\frac{t^{\frac{\gamma d}{\beta}+1}}{\ell_n^{\alpha-d}}\label{eq:bound-before-n}
\end{align}

We now substitute the values of $v_n$ and $\ell_n$ into the bound.
Recall from \cref{eq:n-value} that
\begin{align}
	n = \left\lfloor{\frac{1}{\log L}\log\left[r \left(\frac{t}{r^{\alpha-2d}}\right)^\eta\right]}\right\rfloor,
\end{align}
where $\eta \in (0,\frac{1}{\alpha-d})$ is an arbitrary small constant.
With this choice, we can bound $\ell_n$ from both above and below:
\begin{align}
	r\geq r\left(\frac{t}{r^{\alpha-2d}}\right)^\eta \geq \ell_n = L^n \geq \frac{r}{L}\left(\frac{t}{r^{\alpha-2d}}\right)^\eta = \frac{r}{(\xi\log r_*)^{1/(2d+1-\alpha)}}\left(\frac{t}{r^{\alpha-2d}}\right)^\eta. \label{eq:ell_n}
\end{align}
With $v_1 = 4e\tau \ell_1$, $x = \xi \log r_* = L^{2d+1-\alpha}$, we also have a bound for $v_n$ from \cref{eq:vn-def}:
\begin{align}
	v_n \leq r^{2d+1-\alpha}\left(\frac{t}{r^{\alpha-2d}}\right)^{\eta(2d+1-\alpha)} \frac{4e\tau x^{1/(2d+1-\alpha)} +\nu\lambda(n-1) x^{(2d+1)/(2d+1-\alpha)}}{x}.
\end{align}
Assuming that $r_* \geq e^{e^{2d+1-\alpha}/\xi}$ so that $\log L\geq 1$ and $x \geq 1$, we have $n \leq \log r /\log L \leq \log r_* = x/\xi$.
We can then crudely upper bound
\begin{align}
	v_n
	\leq  (4 e\tau+\nu\lambda/\xi)(\xi\log r_*)^{\frac{2d+1}{2d+1-\alpha}}\ \frac{r}{t} \left(\frac{t}{r^{\alpha-2d}}\right)^{1+\eta(2d+1-\alpha)}
	&=  K_3 (\log r_*)^{\frac{2d+1}{2d+1-\alpha}}  \frac{r}{t} \left(\frac{t}{r^{\alpha-2d}}\right)^{1+\eta(2d+1-\alpha)}. \label{eq:vn}
\end{align}
where $K_3$ is a constant.
Assuming
\begin{align}
	t \leq r^{\alpha-2d}/ [2K_3 (\log r_*)^{\frac{2d+1}{2d+1-\alpha}}]^{\frac{1}{1+\eta(2d+1-\alpha)}} \label{eq:broad-cond-on-t}
\end{align}
so that $v_n t \leq r/2$, we can simplify the first term of \cref{eq:bound-before-n}:
\begin{align}
	e^{\frac{v_n t-r}{\ell_n}}\leq e^{-\frac{r}{2\ell_n}}
	\leq  \exp\left[-\frac12\left(\frac{r^{\alpha-2d}}{t}\right)^\eta\right].\label{eq:full-bound-1}
\end{align}
Similarly, the second term of \cref{eq:bound-before-n} can be simplified to
\begin{align}
	K (\log r_*)^{\kappa-\frac{\delta(\beta-d)}{\beta}} \frac{t^{\gamma d/\beta+1}}{\ell_n^{\alpha-d}}
	&\leq K (\log r_*)^{\kappa-\frac{\delta(\beta-d)}{\beta}}  t^{\gamma d/\beta+1}  \left[\frac{r}{(\xi\log r_*)^{1/(2d+1-\alpha)}}\left(\frac{t}{r^{\alpha-2d}}\right)^\eta\right]^{d-\alpha}\\
	&=K \xi^{(\alpha-d)/(2d+1-\alpha)}(\log r_*)^{\kappa-\frac{\delta(\beta-d)}{\beta}+\frac{\alpha-d}{2d+1-\alpha}} \frac{t^{\gamma d/\beta+1-\eta(\alpha-d)}}{r^{\alpha-d-\eta(\alpha-2d)(\alpha-d)}}  \\
	&=K_4 \log^{\kappa'} r_* \frac{t^{\gamma'}}{r^{\beta'}},  \label{eq:full-bound-2}
\end{align}
where $K_4, \kappa', \gamma'$ are constants. In particular, $\gamma' = \gamma d/\beta+1-\eta(\alpha-d)$ and $\beta'={\alpha-d-\eta(\alpha-2d)(\alpha-d)}>d$.
Combining \cref{eq:full-bound-1,eq:full-bound-2}, we get a bound for the evolution under $H$.

We now simplify the bound by considering $t$ such that
\begin{align}
	t^{\gamma'}\leq \frac{c' r^{\beta'}}{(\log r_*)^{\delta \gamma'}}\label{eq:new-cond-on-t}
\end{align}
for some constant $c'$ and $\delta = \frac{2d+1}{(2d+1-\alpha)(1+\eta(2d+1-\alpha))}$.
Since $\beta/\gamma \leq \alpha-2d$ by assumption, we also have
\begin{align}
	\frac{\beta'}{\gamma'} &= \frac{\alpha-d-\eta(\alpha-2d)(\alpha-d)}{\frac{\gamma d}{\beta}+1 - \eta(\alpha-d)}
	\leq \frac{\alpha-d-\eta(\alpha-2d)(\alpha-d)}{\frac{d}{\alpha-2d}+1 - \eta(\alpha-d)}
	= \alpha-2d.
\end{align}
Therefore, with $c' = (2K_3)^{\frac{-\gamma'}{(1+\eta(2d+1-\alpha))}}$,
\cref{eq:new-cond-on-t} satisfies the condition in \cref{eq:broad-cond-on-t}.
In addition, for $r^{\alpha-2d}\geq t$, there exists a constant $K_5$ such that
\begin{align}
	\exp\left[-\frac12\left(\frac{r^{\alpha-2d}}{t}\right)^\eta\right]
	\leq K_5 \left(\frac{t}{r^{\alpha-2d}}\right)^{\gamma'}
	\leq K_5 \frac{t^{\gamma'}}{r^{\beta'}},\label{eq:full-bound-1b}
\end{align}
where we have again used $\gamma'(\alpha-2d)\geq \beta'$ in the last inequality.
Replacing \cref{eq:full-bound-1} by \cref{eq:full-bound-1b} and combining with \cref{eq:full-bound-2}, we arrive at a bound
\begin{align}
	\norm{\P_r e^{\L t}\oket{O}} \leq C' \log^{\kappa'} r_* \frac{t^{\gamma'}}{r^{\beta'}},\label{eq:dhaskds}
\end{align}
for all $t^{\gamma'}\leq c' r^{\beta'}/\log^\delta r_*$,
where $C' \geq K_4 + K_5$ and $c'$ are constants,
\begin{align}
	&\kappa' = \kappa-\frac{\delta(\beta-d)}{\beta}+\frac{\alpha-d}{2d+1-\alpha},\\
	&\gamma' = \gamma d/\beta+1-\eta(\alpha-d),\\
	&\beta'={\alpha-d-\eta(\alpha-2d)(\alpha-d)}>d.
\end{align}
If $\kappa' <\delta$, we simply replace $\kappa'$ by $\delta$ in \cref{eq:dhaskds}. Such replacement can only increase the bound in \cref{eq:dhaskds}.
Therefore, \cref{lem:add-long-range-recursive} follows.
\end{proof}

We now use \cref{lem:add-long-range-recursive} to prove \cref{lem:bound-finite-lattice}.
To satisfy the assumption of \cref{lem:add-long-range-recursive}, we start with the bound in Ref.~\cite{Foss-FeigG}:
There exist constants $K_6,K_7$, and $v_F$ such that
\begin{align}
	\norm{\comm{O',e^{\L t}O}} \leq K_6 \Exp{v_F t - \frac{r}{t^{(1+d)/(\alpha-2d)}}} + K_7 \frac{t^{\frac{\alpha(\alpha-d+1)}{\alpha-2d}}}{r^{\alpha}},\label{eq:mike}
\end{align}
for all single-site, unit-norm operators $O'$ supported a distance $r$ from $O$.
We consider the regime
\begin{align}
	t^{\frac{\alpha(\alpha-d+1)}{\alpha-2d}} \leq c r^{\alpha-d}/\log^\delta r_* \leq c r^{\alpha-d} \leq  c r^{\alpha},\label{eq:mike-t-regime}
\end{align}
where we choose $c = (2v_F)^{-\alpha}$ so that
\begin{align}
	v_F t \leq  \frac{r}{2t^{(1+d)/(\alpha-2d)}}.
\end{align}
Therefore, there exists a constant $K_8$ such that
\begin{align}
	K_6 \Exp{v_F t - \frac{r}{t^{(1+d)/(\alpha-2d)}}} \leq K_6 \Exp{ - \frac{r}{2t^{(1+d)/(\alpha-2d)}}}\leq K_8 \left(\frac{t^{(1+d)/(\alpha-2d)}}{r}\right)^{\frac{\alpha(\alpha-d+1)}{1+d}}
	\leq K_8 \frac{t^{\alpha(\alpha-d+1)/(\alpha-2d)}}{r^{\alpha}} \label{eq:mike-sim-1}
\end{align}
holds for all $t$ satisfying \cref{eq:mike-t-regime}.
In the last inequality, we have used $\alpha-d+1 \geq d+1$ to lower bound the exponent of $r$.
Substituting \cref{eq:mike-sim-1} into \cref{eq:mike}, we get a simplified version of the bound in Ref.~\cite{Foss-FeigG}:
\begin{align}
	\norm{\comm{O',e^{\L t}O}} \leq K_9 \frac{t^{\frac{\alpha(\alpha-d+1)}{\alpha-2d}}}{r^{\alpha}},\label{eq:mike2}
\end{align}
where $K_9 = K_7 + K_8$.
Applying Lemma 4 in Ref.~\cite{tranHierarchyLinearLight2020a}, there exists a constant  $K_{10}$ such that:
\begin{align}
	\norm{\P_r e^{\L t}\oket{O}} \leq K_{10} \frac{t^{\frac{\alpha(\alpha-d+1)}{\alpha-2d}}}{r^{\alpha-d}}
	< K_{10} \log^{\delta}r_* \frac{t^{\frac{\alpha(\alpha-d+1)}{\alpha-2d}}}{r^{\alpha-d}},\label{eq:mike3}
\end{align}
where the additional factor $-d$ in the exponent of $r$ comes from ``integrating'' over sites that are at least a distance $r$ from the origin.
\Cref{eq:mike3} satisfies the assumption of \cref{lem:add-long-range-recursive}, with $C \rightarrow K_{10}, c \rightarrow (2v_F)^{-\alpha}, \kappa \rightarrow \delta, \gamma \rightarrow \frac{\alpha(\alpha-d+1)}{\alpha-2d}, \beta \rightarrow \alpha-d$.
Therefore, by the lemma, there exist constants $C_1,c_1,\kappa_1$ such that
\begin{align}
	\norm{\P_r e^{\L_r}\oket{O}} \leq C_1 \log^{\kappa_1} r_* \frac{t^{\gamma_1}}{r^{\beta_1}}\label{eq:first-iteration}
\end{align}
holds for all $t^{\gamma_1} \leq c_1 r^{\beta_1}/\log^\delta r_*$,
where
\begin{align}
	&\gamma_1 = \frac{\alpha(\alpha-d+1)}{\alpha-2d}\frac{d}{\alpha-d}+1 - \eta (\alpha-d),\\
	&\beta_1 = \alpha-d-\eta(\alpha-2d)(\alpha-d).
\end{align}
\Cref{eq:first-iteration} again satisfies the assumption of \cref{lem:add-long-range-recursive}.
Applying the lemma again with $\gamma\rightarrow \gamma_1,\beta\rightarrow \beta_1$, we obtain
\begin{align}
	\norm{\P_r e^{\L_r}\oket{O}} \leq C_2 \log^{\kappa_2} r_* \frac{t^{\gamma_2}}{r^{\beta_2}}\label{eq:second-iteration}
\end{align}
for some constants $C_2,\kappa_2, \beta_2 = \beta_1$, and
\begin{align}
	\gamma_2 = \frac{\gamma_1 d}{\beta_1} + 1 -\eta(\alpha-d) \equiv f(\gamma_1).
\end{align}
After applying \cref{lem:add-long-range-recursive} for $m$ times, we obtain
\begin{align}
	\norm{\P_r e^{\L_r}\oket{O}} \leq C_m \log^{\kappa_m} r_* \frac{t^{\gamma_m}}{r^{\beta_m}}
	= C_m \log^{\kappa_m} r_* \left(\frac{t}{r^{\beta_m/\gamma_m}}\right)^{\gamma_m}\label{eq:mth-iteration}
\end{align}
for some constants $C_m,\kappa_m, \beta_m = \beta_1$, and
\begin{align}
	\gamma_m = f^{\circ (m-1)}(\gamma_1),
\end{align}
where $f^{\circ (m-1)}$ denotes the $(m-1)$-th composition of the function $f$.
It is straightforward to show that
\begin{align}
	&\lim_{\eta\rightarrow 0}\lim_{m\rightarrow \infty}\gamma_m = \lim_{\eta\rightarrow 0}   \frac{\alpha-d -\eta(\alpha-2d)(\alpha-d)}{\alpha-2d} = \frac{\alpha-d}{\alpha-2d},\\
	&\lim_{\eta\rightarrow 0}\lim_{m\rightarrow \infty} \frac{\beta_m}{\gamma_m} = \alpha-2d.
\end{align}
Therefore, for all $\epsilon>0$, there exist $m\geq 1,\eta\in (0,\frac{1}{\alpha-d})$ such that $\beta_m/\gamma_m \geq \alpha-2d - \epsilon$ and $\gamma_m \geq \frac{\alpha-d}{\alpha-2d}-\epsilon$.
We obtain
\begin{align}
	\norm{\P_r e^{\L_r}\oket{O}} \leq C_m \log^{\kappa_m} r_* \left(\frac{t}{r^{\alpha-2d-\epsilon}}\right)^{\frac{\alpha-d}{\alpha-2d}-\epsilon},
\end{align}
which holds for all $t \leq c_m^{1/\gamma_m} r^{\alpha-2d-\epsilon}/(\log r_*)^{\delta/\gamma_m} \leq r^{\alpha-2d-\epsilon}$.
\Cref{lem:bound-finite-lattice} thus follows.
\end{proof}

\subsection{Removing the dependence on the lattice size}\label{sec:untruncate-proof}
In this section, we use \cref{lem:bound-finite-lattice} to prove \cref{lem:untruncate-lattice} by removing the dependence on $r_*$.

\begin{proof}
Let $H_\out = \P_{r_0} H$ denote the terms of the Hamiltonian $H$ that have support outside a distance $r_0$ from the origin, $H_\ins = H - H_\out$ be the rest of the Hamiltonian, and $\L_\out$, $\L_\ins$ are the corresponding Liouvillians.
Using the triangle inequality, we have
\begin{align}
	\norm{\P_r e^{\L t}\oket{O}}
	\leq \norm{\P_r e^{\L_\ins t}\oket{O}} + \sum_{h_{ij}} \norm{\P_r \int_0^t ds\ e^{\L (t-s)} \L_{h_{ij}}e^{\L_\ins s} \oket{O}},	\label{eq:untruncate-1}
\end{align}
where the sum is taken over terms $h_{ij}$ in $H_\out$.
Without loss of generality, we assume $\dist(i,0)\leq \dist(j,0)$, which implies $\dist(j,0)\geq r_0$.
In addition, since $e^{\L_\ins s}\oket{O}$ is supported entirely within the radius $r_0$ from the origin, only terms where $\dist(i,0)\leq r_0$ contribute to the above sum.
We consider two cases: $\dist(i,0)>r_0/2$ and $\dist(i,0)\leq r_0/2$.

In the former case, we insert $\P_{\dist(i,0)}$ in the middle of the integrand and bound
\begin{align}
	&\sum_{h_{ij}:\dist(i,0)\in(\frac{r_0}2,r_0]} \norm{\P_r \int_0^t ds\ e^{\L (t-s)} \L_{h_{ij}}\P_{\dist(i,0)}e^{\L_\ins s} \oket{O}}
	\leq 4\sum_{h_{ij}:\dist(i,0)\in(\frac{r_0}2,r_0]}\norm{h_{ij}}
	\int_0^t ds \norm{\P_{\dist(i,0)} e^{\L_\ins s}\oket{O}}\nonumber\\
	&\leq 4K_1 C\log^{\kappa}(2r_0) \sum_{i:\dist(i,0)\in(\frac{r_0}2,r_0]} \int_0^t ds
	\left(\frac{t}{\dist(i,0)^{\alpha-2d-\epsilon}}\right)^{\frac{\alpha-d}{\alpha-2d}-\epsilon}
	\leq K_2 r_0^d t\left(\frac{t}{r_0^{\alpha-2d-\epsilon}}\right)^{\frac{\alpha-d}{\alpha-2d}-\epsilon}, \label{eq:untruncate-2}
\end{align}
where $K_2$ is a constant.
We have used \cref{lem:bound-finite-lattice} to bound the evolution under $e^{\L_\ins s}$, which is supported entirely within a truncated lattice of diameter $2r_0$, and used the fact that the interaction $h_{ij}$ decays as a power law with an exponent $\alpha > 2d$ to bound the sum over $j$ by a constant.
We require $t \leq c_1 r^{\alpha-2d-\epsilon}/\log^{\delta}(2r_0)$, for some constant $c_1, \delta$, to satisfy the conditions of \cref{lem:bound-finite-lattice}.

On the other hand, when $\dist(i,0)\leq r_0/2$, we have $\dist(i,j)\geq r_0/2$.
Therefore, there exists a constant $c_2$ such that $\sum_{j} \norm{h_{ij}} \leq c_2 / r_0^{\alpha-d}$ for all $i$.
We can then bound
\begin{align}
	\sum_{h_{ij}:\dist(i,0)\leq \frac{r_0}2} \norm{\P_r \int_0^t ds\ e^{\L (t-s)} \L_{h_{ij}}e^{\L_\ins s} \oket{O}}
	\leq 4\sum_{i:\dist(i,0)\leq\frac{r_0}2}\frac{c_2}{r_0^{\alpha-d}} \int_0^t ds
	&\leq K_3\frac{t}{r_0^{\alpha-2d}}, \label{eq:untruncate-3}
\end{align}
for some constant $K_3$.

Using \cref{lem:bound-finite-lattice} on the first term of \cref{eq:untruncate-1} and combining with \cref{eq:untruncate-2,eq:untruncate-3}, we have:
\begin{align}
	\norm{\P_r e^{\L t}\oket{O}} \leq
	C \log^\kappa(2r_0) \left(\frac{t}{r^{\alpha-2d-\epsilon}}\right)^{\frac{\alpha-d}{\alpha-2d}-\epsilon}
	+K_2 r_0^d t\left(\frac{t}{r_0^{\alpha-2d-\epsilon}}\right)^{\frac{\alpha-d}{\alpha-2d}-\epsilon}
	+ K_3\frac{t}{r_0^{\alpha-2d}}.
\end{align}
We choose $r_0 = r^{\xi}$, where
\begin{align}
	\xi = \left(1 + \frac{\alpha-d}{\alpha-2d}-\epsilon\right)\frac{\alpha-2d-\epsilon}{\alpha-2d-\epsilon\frac{(\alpha-2d)^2+\alpha-d}{\alpha-2d}+\epsilon^2} \geq \frac{\alpha-d}{\alpha-2d}, \label{eq:untruncr0}
\end{align}
and we require $\epsilon \leq (\alpha - 2 d)^2/[(\alpha - 2 d)^2 + \alpha - d]$ so that the lower bound on $\xi$ holds.
Under this choice,
\begin{align}
	K_2 r_0^d t\left(\frac{t}{r_0^{\alpha-2d-\epsilon}}\right)^{\frac{\alpha-d}{\alpha-2d}-\epsilon}
	= K_2  \left(\frac{t}{r^{\alpha-2d-\epsilon}}\right)^{1+\frac{\alpha-d}{\alpha-2d}-\epsilon}
	\leq K_4  \left(\frac{t}{r^{\alpha-2d-\epsilon}}\right)^{\frac{\alpha-d}{\alpha-2d}-\epsilon}\label{eq:untrunc-bound1},
\end{align}
for all $t \leq c_1 r^{\alpha-2d-\epsilon}$, where $K_4$ is a constant.
In addition, for $\epsilon \leq (\alpha - 2 d)^2/[(\alpha - 2 d)^2 + \alpha - d]$, $\xi \geq (\alpha-d)/(\alpha-2d)$ and, therefore,
\begin{align}
	K_3 \frac{t}{r_0^{\alpha-2d}} \leq K_3 \frac{t}{r^{\alpha-d}}.\label{eq:untrunc-bound2}
\end{align}
Combining \cref{eq:untruncr0,eq:untrunc-bound1,eq:untrunc-bound2}, we have
\begin{align}
	\norm{\P_r e^{\L t}\oket{O}} \leq
	K_5 \log^\kappa(r) \left(\frac{t}{r^{\alpha-2d-\epsilon}}\right)^{\frac{\alpha-d}{\alpha-2d}-\epsilon}
	+K_6\frac{t}{r^{\alpha-d}},\label{eq:final-with-log}
\end{align}
which holds for all $t \leq c_1 r^{\alpha-2d-\epsilon}/\log^\delta(2r^\xi)$ for some constants $K_5,K_6$ independent of $t,r$.

Next, we simplify \cref{eq:final-with-log} by ``hiding'' the factor $\log^\kappa r$ inside the constant $\epsilon$.
Specifically, there exist a constant $K_7$ such that $\log^\kappa r \leq K_7 r^{\epsilon'}$, where $\epsilon' = \frac{\epsilon}{2}\left(\frac{\alpha-d}{\alpha-2d}-\frac{\epsilon}{2}\right)$, and constants $K_8,K_9$ such that
\begin{align}
	\norm{\P_r e^{\L t}\oket{O}} \leq
	K_8 \log^\kappa(r) \left(\frac{t}{r^{\alpha-2d-\epsilon/2}}\right)^{\frac{\alpha-d}{\alpha-2d}-\frac{\epsilon}{2}}
	+K_9\frac{t}{r^{\alpha-d}}
	\leq
	K_7K_8  \left(\frac{t}{r^{\alpha-2d-\epsilon}}\right)^{\frac{\alpha-d}{\alpha-2d}-\frac{\epsilon}{2}}
	+K_9\frac{t}{r^{\alpha-d}},
\end{align}
which holds for all $t \leq c_1 r^{\alpha-2d-\epsilon/2}/\log^\delta(2r^{\xi})$.
In addition, there exists a constant $K_{10}$ such that $K_{10} \log^\delta(2r^{\xi}) \leq r^{\epsilon/2}$ for all $r\geq 1$.
By requiring that $t \leq K_{10} c_1 r^{\alpha-2d-\epsilon}$, we also ensure $t \leq c_1 r^{\alpha-2d-\epsilon/2}/\log^\delta(2r^{\xi})$.
Therefore, \cref{lem:untruncate-lattice} follows with $c \rightarrow c_1 K_{10}, C_1 \rightarrow K_7K_8$, and $ C_2 \rightarrow K_9$.

\end{proof}

\section{Applications of Theorem~\ref{lem:untruncate-lattice}}\label{sec:applications}
We discussed in the main text that the tightened light cone and nearly optimal tail in \cref{lem:untruncate-lattice} improved the scaling for various applications of Lieb-Robinson bounds to problems of physical interest in the regime $2d < \alpha < 2d + 1$
Here we provide some mathematical details to justify those assertions.
We also provide a table briefly summarizing the bounds we will use to compare, where we consider each bound to take the form $\norm{[A(t), B]} \leq c t^{\gamma}/r^{\beta}$ for some constants $c$, $\gamma$ and $\beta$, where $A$ is single-site, but $B$ may generally be some large multi-site operator.
\begin{table}[ht!]
\centering
\begin{tabular}{lllllll}
\toprule
Bound & Light cone                                       & Tail & $\gamma$ & $\beta$ & $\gamma'$ & $\beta'$ \\ \hline
&&&\vspace*{-0.1in}&\\
 This work (B1) & $t \gtrsim r^{\alpha - 2d}$           & $1/r^{\alpha - d}$ & $\frac{\alpha - d}{\alpha - 2d}$ & $\alpha - d$ & $\gamma + 1$ & $\beta$   \\
Ref.~\cite{Foss-FeigG} (B2)  & $t \gtrsim r^{\frac{(\alpha-d)(\alpha - 2d)}{\alpha(\alpha -d +1)}}$ & $1/r^{\alpha - d}$  & $\frac{\alpha(\alpha - d + 1)}{\alpha - 2d}$ & $\alpha - d$ & $\gamma + 1$ & $\beta$ \\
Ref.~\cite{tranLocalityDigitalQuantum2019a} (B3) & $t \gtrsim r^{\frac{\alpha - 2d}{\alpha - d}}$ & $1/ r^{\alpha - 2d}$ & $\alpha - d$ & $\alpha - 2d$ & $\gamma$ & $\beta$ \vspace*{0.02in}\\ \botrule
\end{tabular}
\caption{Comparison of Lieb-Robinson bounds for $2d<\alpha<2d+1$. We name the bounds B1, B2, and B3 for brevity. We ignore the arbitrarily small parameter $\epsilon$ in B1 for simplicity, as it does not affect the conclusions.}
\label{tab:bound_comparison}
\end{table}

We first consider the application of the bound on the growth of connected correlators.
Consider two unit-norm, single-site observables $A$ and $B$ initially supported on sites $x$ and $y$, respectively, such that $x$ and $y$ are separated by a distance $r$. Let $\ket{\psi}$ be a product state between $\mathcal{B}_{r/2}(x)$ and $\mathcal{B}_{r/2}(y)$, where $\mathcal{B}_{r/2}(x)$ is the ball of radius $r/2$ around $x$.
The connected correlator is defined by
\begin{equation}
    C(r,t) \equiv \avg{A(t)B(t)} - \avg{A(t)}\avg{B(t)},
\end{equation}
where $\avg{\cdot} \equiv \bra{\psi}\cdot\ket{\psi}$.
Define $\tilde{A}(t) \equiv \Tr_{\mathcal{B}_{r/2}^{c}(x)}[A(t)]$ and $\tilde{B}$ similarly.
It is elementary to bound $C(r,t)$ by
\begin{equation}
    C(r,t) \leq 2\norm{A(t)-\tilde{A}(t)} + 2\norm{B(t)-\tilde{B}(t)}.
\end{equation}
That is, the connected correlator is controlled by the error in truncating $A(t)$ and $B(t)$ to within a ball of radius $r/2$ around their initial support.
A simple result from Ref.~\cite{Bravyi06} allows us to bound this error
 \begin{equation}
       \norm{A(t) - \tilde{A}(t)} \leq \int_{\mathcal{B}_{r/2}^{c}(x)} dU \norm{[U,A(t)]} \leq c \frac{t^{\gamma}}{(r/2)^{\beta}},
    \end{equation}
where $dU$ is the Haar measure on unitaries supported outside a ball of radius $r/2$ around $x$.
Thus, for a given Lieb-Robinson bound
\begin{equation}
    C(r,t) \leq 2^{\beta+2}c\frac{t^{\gamma}}{r^{\beta}}.
\end{equation}
Ignoring constants and focusing on the asymptotics with respect to $t$ and $r$, we see that
\begin{align}
    R_{12} \equiv \frac{\text{B1}}{\text{B2}} &\sim \left(\frac{t^{\alpha-d}}{t^{\alpha(\alpha - d + 1)}} \right)^{\frac{1}{\alpha - 2d}}, \\
    R_{13} \equiv \frac{\text{B1}}{\text{B3}} &\sim \left(\frac{t^\frac{1}{\alpha - 2d}}{t}\right)^{\alpha -d}r^{-d}.
\end{align}
Thus, as $t$ increases, the tighter light cone of B1 leads to significant improvement in bounding the connected correlator as compared to B2.
While B1 has a slightly worse time-dependence than B3 (as $0 < \alpha - 2d < 1$), it has a much better $r$-dependence.
And, of course, when taken together, B1 follows a tighter light cone than B3, leading to an overall more useful bound.
Thus, while B3 may strictly have a better time-dependence, B1 provides the tightest holistic bound on the growth of connected correlators.

A nearly identical calculation allows us to place stricter bounds on the time required to generate topologically ordered states from topologically trivial ones.
We define topologically ordered states as follows: consider a lattice $\Lambda$ with diameter $L$ and $O(L^{d})$ sites.
We say that a set of orthonormal states $\{\ket{\psi_{1}}, \dots, \ket{\psi_{k}}\}$ are topologically ordered if there exists a constant $ \delta$ such that
\begin{equation}
    \epsilon \equiv \sup_{O}\max_{1 \leq i, j \leq k}\{\abs{\bra{\psi_{i}}O\ket{\psi_{j}} - \bra{\psi_{j}}O\ket{\psi_{i}}, 2\bra{\psi_{i}}O\ket{\psi_{j}}}\}
\end{equation}
is bounded $\epsilon = O(L^{-\delta})$.
The supremum is taken over operators $O$ supported on a subset of the lattice with diameter $\ell' < L$, so $\epsilon$ essentially measures the ability to distinguish between states $\ket{\psi_{i}}$ using an operator $O$ supported on only a fraction of the lattice.
In contrast, we say the states are topologically trivial if $\epsilon$ is independent of $L$.
Given a set of topologically trivial states $\{\ket{\phi_{i}}\}_{i}$ and a set of topologically ordered states $\{\ket{\psi_{i}}\}_{i}$, the question is how long it takes to generate a unitary $U$ such that $U\ket{\phi_{i}} = \ket{\psi_{i}}$ for all $i$ using a power-law Hamiltonian.
Ref.~\cite{tranHierarchyLinearLight2020a} proves that this time is controlled by the time it takes $\norm{O(t) - O(t,\ell')}$ to become non-vanishing in $L$, where $O(t, \ell')$ is the truncation of the time evolution of $O$ to a radius $\ell'$.
This expression is bounded in the exact same way as the connected correlator was, and so we see the same improvement from B1 as compared to both B2 and B3.

Finally, we consider the task of simulating the evolution of a local observable under a power-law Hamiltonian $H$ using quantum simulation algorithms.
In contrast to the earlier applications, it is not sufficient to simply truncate the time-evolved observable to the light cone.
Instead, to simulate the observable, we need to construct the Hamiltonian that generates the dynamics of the observable inside the light cone.

Let $A$ be a unit-norm, single-site observable originally supported on site $x$, and consider $A(t)$ its evolution under a 2-local power-law Hamiltonian $H$.
Define $H_{r}$ to be the Hamiltonian constructed by taking terms of $H$ that are fully supported within $\mathcal{B}_{r}(x)$, and let $\tilde{A}(t)$ be $A(0)$ evolved under $H_{r}$ (note that this is different than our previous definition of $\tilde{A}$).
The question is how large $r$ must be (i.e., how many terms of $H$ must we simulate) for $\norm{A(t)-\tilde{A}(t)}$ to have small error.
Intuitively, this observable should be constrained to lie mostly within the light cone of a Lieb-Robinson bound for $H$ as long as the tail of the bound decays sufficiently quickly, so we expect $r$ to be related to the lightcone of our bounds.
Refs.~\cite{tranHierarchyLinearLight2020a,tranLocalityDigitalQuantum2019a} make this intuition rigorous and yield
\begin{equation}
\norm{A(t) - \tilde{A}(t)} \lesssim \frac{t^{\gamma'}}{r^{\beta'}},
\end{equation}
where $\gamma'$ and $\beta'$ are listed in \cref{tab:bound_comparison}.
In particular, in order to ensure only a constant error, we must choose $r \sim t^{\gamma'/\beta'}$, which corresponds to simulating about $r^{2} \sim t^{2d\gamma'/\beta'}$ terms of the Hamiltonian.
We can compare this exponent $\phi \equiv \gamma'/\beta'$ between bounds:
\begin{align}
    \phi_{\text{B1}} - \phi_{\text{B2}} &= -\frac{(\alpha-1)(\alpha - d) + \alpha}{(\alpha -d)(\alpha - 2d)}, \\
    \phi_{\text{B1}} - \phi_{\text{B3}} &= -\frac{(\alpha-d)^{2} + d}{(\alpha-2d)(\alpha-d)}.
\end{align}
These differences are all negative for $2d < \alpha <2d+1$, meaning the current work provides the tightest bound on how many terms must be kept to get constant error when simulating the evolution of local observables in this regime.

\makeatletter
\renewcommand\@biblabel[1]{[S#1]}
\makeatother
\bibliography{my-bib}